\colorlet{darkgreen}{green!50!black}
\colorlet{dg}{darkgreen}
\colorlet{medgray}{gray!75}
\colorlet{lightgray}{gray!30}
\colorlet{llightgray}{lightgray!30}
\colorlet{bagcol}{gray!70}
\colorlet{pastcol}{gray!20}
\definecolor{linkcol}{rgb}{0,0,0.4}
\definecolor{citecol}{rgb}{0.5,0,0}
\tikzstyle{hidden}=[opacity=0]
\tikzstyle{fade}=[opacity=0.2]
\tikzstyle{nonsol}=[dashed]
\tikzstyle{bold}=[draw, line width=2pt]
\tikzstyle{gamma}=[draw, line width=5pt]
\tikzstyle{optional}=[dashed]
\tikzstyle{path}=[decorate, decoration={snake, amplitude=.6mm}]
\tikzstyle{small}=[inner sep=2pt]
\tikzstyle{tiny}=[inner sep=1.7pt]
\tikzstyle{textnode}=[inner sep=0pt]
\tikzstyle{triangle}=[draw, regular polygon, regular polygon sides=3]
\tikzstyle{hex}=[draw, regular polygon, regular polygon sides=6]
\tikzstyle{vertex}=[circle, draw, fill=white]
\tikzstyle{reti}=[vertex, fill=black]
\tikzstyle{leaf}=[vertex, rectangle]
\tikzstyle{leaf2}=[vertex, triangle]
\tikzstyle{smallvertex}=[vertex, small]
\tikzstyle{smallvertex0}=[vertex, star, small]
\tikzstyle{smallvertex1}=[vertex, small, diamond]
\tikzstyle{smallvertex2}=[vertex, triangle, small]
\tikzstyle{smallvertex3}=[vertex, hex, inner sep=3pt]
\tikzstyle{smallvertex4}=[smallvertex]
\tikzstyle{smallleaf}=[leaf, inner sep=3.3pt]
\tikzstyle{smallleaf2}=[leaf2, inner sep=1.7pt]
\tikzstyle{smalltriangle}=[triangle, inner sep=1.5pt]
\tikzstyle{smallreti}=[reti, small]
\tikzstyle{tinyvertex}=[vertex, tiny]
\tikzstyle{normal}=[smallvertex, fill=black]
\tikzstyle{edge}=[draw, -]
\tikzstyle{matching}=[edge, line width=3pt]
\tikzstyle{arc}=[draw, arrows={-Latex[length=6pt]}]
\tikzstyle{boldarc}=[draw, bold, arrows={-Latex[length=10pt]}]
\tikzstyle{revarc}=[draw, arrows={Latex[length=6pt]-}]
\tikzstyle{boldrevarc}=[draw, bold, arrows={Latex[length=10pt]-}]
\tikzstyle{bag}=[bagcol, line width=15pt]
\tikzstyle{past}=[draw=gray, fill=pastcol]
\newcommand{\nextnode}[5][vertex]{\node[small#1] (#2) at ($(#3)+(#4)$) {} edge[revarc, #5] (#3);}
\newcommand{\hairspace}{\kern .05em}
\newcommand{\ceq}{\coloneqq}
\newcommand{\bigO}{\mathcal{O}}
\newcommand{\nothing}{\varnothing}
\newcommand{\inV}{V^-}
\newcommand{\outV}{V^+}
\newcommand{\inA}{A^-}
\newcommand{\outA}{A^+}
\newcommand{\indeg}{\deg^-}
\newcommand{\outdeg}{\deg^+}
\newcommand{\outDelta}[1]{\Delta_{#1}}
\newcommand{\id}{\mathrm{id}}
\newcommand{\YES}{\mathtt{YES}}
\newcommand{\NO}{\mathtt{NO}}
\newcommand{\TC}{\textsc{Tree Containment}}
\newcommand{\STC}{\textsc{Soft Tree Containment}}
\newcommand{\NC}{\textsc{Network Containment}}
\newcommand{\CC}{\textsc{Cluster Containment}}
\newcommand{\MC}{\textsc{Minor Containment}}
\newcommand{\GWS}{\mathit{GWS}}
\newcommand{\HWS}{\mathit{HWS}}
\newcommand{\eqcom}{\text{\,,}}
\newcommand{\eqdot}{\text{\,.}}
\newcommand{\proofitem}[1]{``#1'':}
\newcommand{\proofright}{\proofitem{$\Rightarrow$}}
\newcommand{\proofleft}{\proofitem{$\Leftarrow$}}
\newcommand{\card}[1]{\lvert#1\rvert}
\DeclareMathOperator{\getroot}{root}
\DeclareMathOperator{\GW}{GW}
\DeclareMathOperator{\sw}{sw}
\DeclareMathOperator{\str}{str}
\DeclareMathOperator{\HW}{HW}
\DeclareMathOperator{\SW}{SW}
\algrenewcommand\algorithmicrequire{\textbf{Input:}}
\algrenewcommand\algorithmicensure{\textbf{Output:}}
\theoremstyle{plain}
\newtheorem{theorem}{Theorem}
\newtheorem{proposition}{Proposition}
\newtheorem{lemma}{Lemma}
\newtheorem{assumption}{Assumption}
\theoremstyle{definition}
\newtheorem{definition}{Definition}
\newtheorem{problem}{Problem}
\newcommandx{\problemdefref}[7][3=Input,5=Question]{
  \begin{problem}[#7]\label{prob:#2}
  \colorbox{gray!17!white}{\textsc{#1}}\nopagebreak[4]
  \par\noindent\hangindent=\parindent\textbf{#3:} #4\nopagebreak[4]
  \par\noindent\hangindent=\parindent\textbf{#5:} #6
  \end{problem}
}
\Crefname{assumption}{Assumption}{Assumptions}
\Crefname{figure}{Figure}{Figures}
\Crefname{problem}{Problem}{Problems}
\Crefname{section}{Section}{Sections}
\title{Exploiting Low Scanwidth to Resolve Soft Polytomies\texorpdfstring{\thanks{This paper presents a result
from the first author's master's thesis \cite{Bru24},
which was supervised by the second author.
An extended abstract of this paper appeared in the proceedings of SOFSEM'26 \cite{BW26}.}}{}}
\author{Sebastian Bruchhold\affiliationmark{1} \and Mathias Weller\affiliationmark{2}}
\affiliation{
Technische Universit\"at Berlin, Germany\\
LIGM\@, CNRS\@, Univ Gustave Eiffel, F77454 Marne-la-Vall\'ee, France}
\keywords{phylogenetic networks, containment relations, directed width measures}
\renewcommand\accepted[1]{\def\@accepted{\emph{ accepted None}.}}
\begin{document}
\publicationdata
{}
{2026}
{}
{10.48550/arXiv.2511.20771}
{2026-03-12; 2026-06-28}
{}
\makeatletter % prevent bad boxes
\preto\@firsthead{\hbadness=10000}
\appto\@oddfoot{\hss}
\makeatother
\maketitle
\begin{abstract}
  Phylogenetic networks allow modeling reticulate evolution,
  capturing events such as hybridization and horizontal gene transfer.
  A fundamental computational problem in this context is the \textsc{Tree Containment} problem,
  which asks whether a given phylogenetic network is compatible with a given phylogenetic tree.
  However, the classical statement of the problem is not robust to poorly supported branches in biological data,
  possibly leading to false negatives.
  In an effort to address this, a relaxed version that accounts for uncertainty, called \textsc{Soft Tree Containment},
  has been introduced by Bentert, Mal\'ik, and Weller~[SWAT'18].
  We present an algorithm that solves \textsc{Soft Tree Containment} in
  $2^{\mathcal{O}(\Delta_T \cdot k \cdot \log(k))} \cdot n^{\mathcal{O}(1)}$ time,
  where $k \coloneqq \operatorname{sw}(\Gamma) + \Delta_N$,
  with $\Delta_T$ and $\Delta_N$ denoting the maximum out-degrees in the tree and the network,
  respectively, and $\operatorname{sw}(\Gamma)$ denoting the ``scanwidth''~[Berry, Scornavacca, and Weller, SOFSEM'20]
  of a given tree extension of the network,
  while $n$ is the input size.
  Our approach leverages the fact that phylogenetic networks encountered in practice often exhibit low scanwidth,
  making the problem more tractable.
\end{abstract}

\section{Introduction}

Evolutionary processes are not always represented well by tree-like models, as
hybridization, horizontal gene transfer, and other reticulate events often give rise to more complex structures~\cite{journal-CCR13,journal-TR11}.
To capture these evolutionary relationships, researchers use phylogenetic networks,
which generalize phylogenetic trees by allowing for ``reticulations'' that represent non-tree-like events
(see the monographs by Gusfield~\cite{bk-Gus14} and Huson, Rupp, and Scornavacca~\cite{bk-HRS10}).

A fundamental computational problem arising in this context is the \TC{} problem,
which asks whether a given phylogenetic network is ``compatible'' with
(``firmly displays'' (see \cref{def:fdsd})) a given phylogenetic tree,
that is, whether the network contains a subdivision of the tree (as a subdigraph respecting leaf-labels).
This question is important for reconstruction methods and in the estimation of distances between evolutionary scenarios.

While the \TC{} problem is known to be NP-hard~\cite{journal-KNT+08,journal-ISS10},
numerous special cases of practical interest have been shown to admit polynomial-time
solutions~\cite{conf-GGL+15,conf-FKP15,journal-BS16,journal-GDZ17,journal-KNT+08,journal-ISS10,arxiv-Gun17,Wel18,IJW26}.
However, the classical formulation of \TC{}
is not robust against branches that are only poorly supported by the biological data.
Indeed, to be compatible, all branches must be represented as-is, even when they are weakly supported (see \cref{fig:exampleABCD} for an example).

\begin{figure}[t]
  \centering
  \scalebox{.89}{\begin{tikzpicture}[xscale=.6, yscale=.5]
    \node at (0,-6) {(A)};
    \node[smallvertex] (00) at (0,0) {};
    \nextnode{000}{00}{-135:2}{revarc}
    \nextnode[vertex, xshift=6pt]{0000}{000}{-135:2}{revarc}
    \nextnode[leaf, xshift=12pt, label=below:$a$]{l0}{0000}{-135:2}{revarc}
    \nextnode[leaf, xshift=-12pt, label=below:$b$]{l1}{0000}{-45:2}{revarc}

    \nextnode[reti, label=above:$r$]{0001}{000}{-45:2}{revarc}
    \nextnode[leaf, label=below:$c$]{l2}{0001}{-90:{sqrt(2)}}{revarc}

    \nextnode{001}{00}{-45:2}{revarc}
    \nextnode[leaf, label=below:$d$]{l3}{001}{-90:{2*sqrt(2)}}{revarc}

    \draw[arc] (001) -- (0001);
  \end{tikzpicture}
  \hspace{3ex}
  \begin{tikzpicture}[xscale=.32, yscale=.5]
    \node at (0,-6) {(B)};
    \node[smallvertex] (00) at (0,0) {};
    \nextnode{000}{00}{-135:2}{revarc}
    \nextnode{0000}{000}{-135:2}{revarc}
    \nextnode[leaf, label=below:$a$]{l0}{0000}{-135:2}{revarc}
    \nextnode[leaf, label=below:$b$]{l1}{0000}{-45:2}{revarc}
    \nextnode[leaf, label=below:$c$]{l2}{000}{-45:4}{revarc}
    \nextnode[leaf, label=below:$d$]{l3}{00}{-45:6}{revarc}
  \end{tikzpicture}
  \hspace{3ex}
  \begin{tikzpicture}[xscale=.45, yscale=.5]
    \node at (0,-6) {(C)};
    \node[smallvertex] (00) at (0,0) {};
    \nextnode[vertex, label=left:$x$]{000}{00}{-135:2}{revarc}
    \nextnode[leaf, xshift=20pt, label=below:$a$]{l0}{000}{-135:4}{revarc}

    \nextnode[vertex, label=right:$y$]{0001}{000}{-45:2}{revarc}
    \nextnode[leaf, xshift=-8pt, label=below:$c$]{l2}{0001}{-45:2}{revarc}
    \nextnode[leaf, xshift=8pt, label=below:$b$]{l1}{0001}{-135:2}{revarc}

    \nextnode[leaf, xshift=-20pt, label=below:$d$]{l3}{00}{-45:6}{revarc}
  \end{tikzpicture}
  \hspace{3ex}
  \begin{tikzpicture}[xscale=.45, yscale=.5]
    \node at (0,-6) {(D)};
    \node[smallvertex] (00) at (0,0) {};
    \nextnode[vertex, label=left:$z$]{000}{00}{-135:2}{revarc}
    \nextnode[leaf, xshift=20pt, label=below:$a$]{l0}{000}{-135:4}{revarc}

    \nextnode[leaf, xshift=-8pt, label=below:$c$]{l2}{000}{-45:4}{revarc}
    \nextnode[leaf, xshift=4pt, label=below:$b$]{l1}{000}{-90:{2*sqrt(2)}}{revarc}

    \nextnode[leaf, xshift=-20pt, label=below:$d$]{l3}{00}{-45:6}{revarc}
  \end{tikzpicture}}
  \caption{(A)~Example of a phylogenetic network~$N$ with one reticulation~$r$, and
  (B) a tree~$T_B$ that is (firmly) displayed by $N$ (``compatible'' with $N$).
  The tree~$T_C$ depicted in (C) is \emph{not} firmly displayed by $N$.
  However, if the branch~$xy$ in $T_C$ has low support in the biological data used to construct~$T_C$,
  it is possible that~$T_C$ does not reflect the true evolutionary history between taxa~$a$, $b$, and $c$.
  To avoid such artifacts, branches with low support are usually contracted, resulting in the tree~$T_D$, depicted in~(D)\@,
  which now represents exactly the information that is well supported in the data.
  The information represented by $T_D$ is now consistent with the information represented by~$N$,
  so we would like to say that $T_D$ is ``compatible'' with $N$, even though~$N$ does not firmly display $T_D$.
  This motivates the formulation of ``soft containment''.
  Indeed, $T_D$ is softly displayed by $N$
  (since $T_B$~is a binary resolution of $T_D$ that is firmly displayed by~$N$).}\label{fig:exampleABCD}
\end{figure}

To address this limitation, a relaxed version of \TC{}, called \STC{}, was proposed by Bentert and Weller~\cite{BW21}.
This relaxed framework allows ``undoing'' contractions of poorly supported branches
(contracting low-support branches is a common preprocessing step in network analysis),
effectively allowing uncertainty in the form of so-called ``soft polytomies''.
The corresponding notion of ``soft display'' (see \cref{def:fdsd})
permits us to resolve each high-degree vertex into a binary tree.
Then, the resulting binary resolution of the network ought to firmly display the resulting binary resolution of the tree.

Although \STC{}, like \TC{}, remains NP-hard~\cite{BW21}, we aim to solve even large instances in practice.
To achieve this, we leverage the fact that phylogenetic networks occurring in empirical studies are expected to exhibit a certain degree of tree-likeness,
which can be effectively quantified using the recently introduced measure of tree-likeness called ``scanwidth''~\cite{BSW20}.
Our algorithm solves \STC{} in $2^{\bigO(\outDelta{T} \cdot k \cdot \log(k))} \cdot n^{\bigO(1)}$ time,
where $k \ceq \sw(\Gamma) + \outDelta{N}$,
with $\outDelta{T}$ and $\outDelta{N}$ denoting the maximum out-degrees in the tree and the network,
respectively, and $\sw(\Gamma)$ denoting the scanwidth of a tree extension that is given as part of the input.
(Tree extensions resemble tree decompositions and will be defined formally later.)
The algorithm first makes the network binary through ``stretching'' and ``in-splitting''.
This special case (where the network is binary but the tree need not be) is then solved via bottom-up dynamic programming
along the given tree extension.

\section{Related Work}

On a binary network $N$ and a binary tree $T$,
the \TC{} problem can be solved in $2^{\bigO(t^2)} \cdot \card{A(N)}$ time,
where $t$ denotes the treewidth of $N$ and $A(N)$ is its arc set \cite{IJW23}.
This special case of \TC{} is also a special case of \STC{},
as the notions of ``firm'' and ``soft'' display coincide when $N$ and $T$ are both binary~\cite{BMW18}.
Additionally, \STC{} is a generalization of the \CC{} problem \cite{BMW18}.
Van Iersel, Semple, and Steel \cite{journal-ISS10} show that the latter problem
can be solved in polynomial time on binary networks of constant ``level''.
Note that level-$c$ networks have scanwidth at most $c + 1$ \cite{Hol23}.
A comparison of the level, scanwidth, and treewidth of real-world phylogenetic networks
was conducted by Holtgrefe \cite{Hol23},
concluding that treewidth is not much smaller than scanwidth in practice.
He further points out that ``edge-treewidth'' by Magne et al.~\cite{MPS+23} is the undirected analogue of scanwidth.
We also remark that \STC{} is conceptually similar to \MC{}:
The question of whether a binary network $N$ softly displays a possibly non-binary tree~$T$
is equivalent to asking if $T$ can be obtained from a subtree of $N$ through arc contraction (while preserving leaf-labels).
In fact, the reduction of general \STC{} to this special case is the subject of \cref{sec:arbnet}.
Janssen and Murakami \cite{JM21} study a similar containment notion based on arc contraction.

\section{Preliminaries}

\paragraph{Directed Graphs.}

A \emph{directed graph} (or \emph{digraph} for short) is an ordered pair $D = (V, A)$
that consists of a non-empty, finite set $V$ of \emph{vertices}
and a set $A \subseteq \{(u, v) \in V^2 \mid u \neq v\}$ of \emph{arcs}.
We also use the notation $V(D) \ceq V$ and $A(D) \ceq A$.
The in-neighbors of a vertex $v \in V$ are referred to as its \emph{parents} (denoted $\inV_D (v)$),
and its out-neighbors are referred to as its \emph{children} (denoted $\outV_D (v)$).
The set of incoming arcs of $v$ is denoted $\inA_D (v)$,
and the set of outgoing arcs of $v$ is denoted $\outA_D (v)$.
The \emph{in-degree} of $v$ is $\indeg_D (v) \ceq \card{\inV_D (v)}$,
the \emph{out-degree} of $v$ is $\outdeg_D (v) \ceq \card{\outV_D (v)}$,
and the \emph{degree} of $v$ is $\deg_D (v) \ceq \indeg_D (v) + \outdeg_D (v)$.
The maximum out-degree of any vertex in $D$ is denoted $\outDelta{D}$.
If the maximum in-degree and the maximum out-degree of $D$ are both at most $2$,
then $D$ is called \emph{binary}.
Note that this definition permits a vertex in a binary digraph to have in-degree $1$ and out-degree $1$ simultaneously,
and also in-degree $2$ and out-degree $2$ simultaneously,
making it slightly more permissive than the definition often used in phylogenetics
that requires every non-root non-leaf vertex in a binary digraph to have degree exactly $3$.
We define the terms \emph{root} and \emph{leaf} in a digraph as follows:
A \emph{root} of $D$ is a vertex with in-degree~$0$,
and $D$ is called \emph{rooted} if it has exactly one root (denoted $\getroot(D)$).
A \emph{leaf} of $D$ is a vertex with out-degree~$0$,
and the set of all leaves of~$D$ is denoted $L(D)$.
We use ${\geq_D} \subseteq V^2$ to denote the reachability relation of~$D$
(so $u \geq_D v$ means that there is a directed path from $u$ to $v$, possibly having length $0$).
The irreflexive kernel of $\geq_D$ is denoted $>_D$.
We say that $D$ is \emph{weakly connected} if the underlying graph of $D$ is connected.
The result of removing $v$ from $D$ is denoted $D - v$.
We also use the notation $D - U$ when removing a set of vertices $U$ from $D$.
If a digraph $D'$ can be obtained by removing vertices and arcs from $D$,
then $D'$ is a \emph{subdigraph} of $D$, denoted $D' \subseteq D$,
and $D'$ is called \emph{leaf-monotone} if, additionally, $L(D')\subseteq L(D)$.
The subdigraph \emph{induced} by a non-empty set $U \subseteq V$ is $D[U] \ceq (U, \{(u, v) \in A \mid u, v \in U\})$.
When \emph{subdividing} an arc $(u, v) \in A$ with a new vertex $w \notin V$,
the arc $(u, v)$ is removed from $D$, while the vertex $w$ and the arcs $(u, w)$ and $(w, v)$ are added.
The inverse operation, removing~$w$ with its incident arcs and adding $(u, v)$, is called \emph{suppression} of $w$.
When \emph{contracting} $(u, v)$,
the parents of $v$, except for $u$, become parents of $u$,
and the children of $v$, except for possibly $u$, become children of $u$;
the vertex $v$ and its incident arcs are then removed.
If $D$ can be obtained from a digraph $D'$ by subdividing arcs,
then $D$ is a \emph{subdivision} of $D'$.
The \emph{union} of two digraphs $D$ and~$D'$ is $D \cup D' \ceq (V(D) \cup V(D'), A(D) \cup A(D'))$.
A \emph{directed acyclic graph} (DAG) is a digraph without directed cycles.
An \emph{out-tree} is a rooted DAG in which every vertex has in-degree at most $1$.

\paragraph{Phylogenetic Networks.}

A (\emph{phylogenetic}) \emph{network} is a rooted DAG
whose root has out-degree at least~$2$
and in which each non-root vertex either has in-degree~$1$ (and is called a \emph{tree vertex})
or has out-degree~$1$ (and is called a \emph{reticulation}), but not both (see \cref{fig:exampleABCD} (A))\@.
The root of a network is also considered to be a tree vertex.
If a network has no reticulations (that is, it is an out-tree), then it is called a (\emph{phylogenetic})~\emph{tree}.

\paragraph{Scanwidth.}

Undoubtedly, one of the most successful concepts in algorithmic graph theory is the notion of \emph{treewidth}~\cite{Bodlaender93}.
While some problems in phylogenetics yield efficient algorithms on networks of bounded treewidth,
the ``undirected nature'' of treewidth often makes dynamic programming prohibitively complicated or outright unimplementable~\cite{SW22}.
Since phylogenetic networks exhibit a ``natural flow of information'' along the direction of the arcs,
Berry, Scornavacca, and Weller~\cite{BSW20} introduced a width measure similar to treewidth, but respecting the direction of the arcs
(see \cite{Reed99,Safari05} for previous efforts to port the notion of treewidth to directed graphs), called ``scanwidth''.
The idea is to ``inscribe'' the arcs of the network~$N$ into an out-tree~$\Gamma$ (called a ``tree extension'') on $V(N)$,
such that all network arcs are directed away from the root of $\Gamma$
(see \cref{fig:sw}).
This structure allows a bottom-up dynamic-programming approach
whose exponential part depends on the size of the set~$\GW_v (\Gamma)$ of arcs of~$N$ that would be cut when slicing just above any vertex~$v$ of $\Gamma$.

We now give a more comprehensive account of scanwidth and define it formally.
Let $D$ be a DAG\@.
The purpose of scanwidth is to measure how similar $D$ is to an out-tree.
This is motivated by the observation that algorithmic problems can often be solved efficiently on (out-)trees,
with the hope that this generalizes to DAGs that have low scanwidth (and are, thus, ``similar'' to out-trees).
Specifically, low scanwidth is supposed to ensure that bottom-up dynamic programming works like on out-trees:
Solutions for separate subtrees can be computed independently,
and these solutions can be combined efficiently once the parent that connects the subtrees is encountered.
Since $D$ is an arbitrary DAG and not necessarily an out-tree,
we first need an appropriate notion of ``subtree''.
For scanwidth, the subtrees to consider are those of a \emph{tree extension} of $D$,
which is an out-tree $\Gamma$ on $V(\Gamma) = V(D)$ with $A(D) \subseteq {>_\Gamma}$.
That is, the out-tree $\Gamma$ has the same vertex set as $D$
and, for every arc $(u, v)$ of $D$, there is a directed path from $u$ to $v$ in $\Gamma$.
This ensures that reachability in $D$ implies reachability in $\Gamma$
(but there is generally no direct correspondence between arcs of $D$ and arcs of $\Gamma$).
Note that tree extensions generalize topological orderings,
and so every DAG has at least one tree extension
(which may be a directed path corresponding to a topological ordering).

So we want to solve algorithmic problems on $D$ via bottom-up dynamic programming along $\Gamma$,
where separate subtrees of $\Gamma$ represent parts of $D$ that should be solved independently.
Crucially, we need to be able to extend solutions for small subtrees to obtain solutions for larger subtrees.
For this to work efficiently, we want the vertices of each subtree of $\Gamma$ to represent a part of $D$
that is mostly ``self-contained'' in the sense that only few arcs in $D$ cross into the part from the outside.
For the subtree of $\Gamma$ rooted at a vertex $t \in V(\Gamma)$,
these are precisely the arcs of the set
$\GW^D_t (\Gamma) \ceq \{(u, v) \in A(D) \mid u >_\Gamma t \geq_\Gamma v\}$.\footnote{Scanwidth can alternatively
be defined in terms of topological orderings instead of tree extensions \cite{BSW20}.
The analogous sets are denoted $\SW$.
The notation $\GW$ reflects that tree extensions are often denoted using a Greek Gamma.
We later use the notation $\HW$ for related sets because H is the successor of G in the Latin alphabet.}
We want even the largest of these sets to be small,
and so we define $\sw_D (\Gamma) \ceq \max_{t \in V(\Gamma)} \card{\GW^D_t (\Gamma)}$.
Then, the \emph{scanwidth} of $D$, denoted $\sw(D)$,
is the minimum $\sw_D (\Gamma)$ over all tree extensions $\Gamma$ of $D$.

\begin{figure}[t]
  \centering
  \scalebox{.89}{\begin{tikzpicture}[xscale=.6, yscale=.5]
    \node at (0,{-7.5+sqrt(2)}) {(A)};
    \node[smallvertex] (00) at (0,0) {};
    \nextnode[vertex, label=above:$u$]{000}{00}{-135:2}{revarc}
    \nextnode[vertex, xshift=6pt]{0000}{000}{-135:2}{revarc}
    \nextnode[leaf, xshift=12pt, label=below:$a$]{l0}{0000}{-135:2}{revarc}
    \nextnode[leaf, xshift=-12pt, label=below:$b$]{l1}{0000}{-45:2}{revarc}

    \nextnode[reti, label=above:$r$]{0001}{000}{-45:2}{revarc}
    \nextnode[leaf, label=below:$c$]{l2}{0001}{-90:{sqrt(2)}}{revarc}

    \nextnode[vertex, label=above:$v$]{001}{00}{-45:2}{revarc}
    \nextnode[leaf, label=below:$d$]{l3}{001}{-90:{2*sqrt(2)}}{revarc}

    \draw[arc] (001) -- (0001);
  \end{tikzpicture}
  \hspace{3ex}
  \begin{tikzpicture}[xscale=.6, yscale=.5]
    \node at (0,-7.5) {(B)};
    \node[smallvertex] (Grt) at (0,0) {};
    \foreach [count=\i from 0] \s/\l in {/v, /u, reti/r}
      \node[vertex, small\s, label=left:$\l$] (v\i) at ($(Grt)+(-90:{sqrt(2)*(\i+1)})$) {};
    \node[smallvertex] (v3) at ($(v1)+(-135:2)$) {};
    \node[smallleaf, label=below:$a$] (Gla) at ($(v3)+(-135:2)$) {};
    \node[smallleaf, label=below:$b$] (Glb) at ($(v3)+(-90:{sqrt(2)})$) {};
    \node[smallleaf, label=below:$c$] (Glc) at ($(v2)+(-90:{sqrt(2)})$) {};
    \node[smallleaf, label=below:$d$] (Gld) at ($(v0)+(-67.5:{3*sqrt(2)/sin(67.5)})$) {};
    \begin{pgfonlayer}{background}
      \draw[lightgray, line width=6pt] (Glc.center) -- (Grt.center);
      \foreach \u/\l in {v0/d, v3/b, v1/a} \draw[lightgray, line width=6pt] (\u.center) -- (Gl\l.center);
    \end{pgfonlayer}
  \end{tikzpicture}
  \hspace{3ex}
  \begin{tikzpicture}[xscale=.6, yscale=.5]
    \node at (0,-7.5) {(C)};
    \node[smallvertex] (Grt) at (0,0) {};
    \foreach [count=\i from 0] \s/\l in {/, /, reti/}
      \node[vertex, small\s, label=left:$\l$] (v\i) at ($(Grt)+(-90:{sqrt(2)*(\i+1)})$) {};
    \nextnode[vertex]{v3}{v1}{-135:2}{revarc}
    \nextnode[leaf, label=below:$a$]{Gla}{v3}{-135:2}{revarc}
    \nextnode[leaf, label=below:$b$]{Glb}{v3}{-90:{sqrt(2)}}{revarc}
    \nextnode[leaf, label=below:$c$]{Glc}{v2}{-90:{sqrt(2)}}{revarc}
    \nextnode[leaf, label=below:$d$]{Gld}{v0}{-67.5:{3*sqrt(2)/sin(67.5)}}{revarc}
    \foreach \u/\v/\b/\s in {Grt/v0/0/, Grt/v1/18/, v0/v2/-18/, v1/v2/0/}
      \draw[arc, \s] (\u) to[bend right=\b] (\v);
    \begin{pgfonlayer}{background}
      \draw[lightgray, line width=6pt] (Glc.center) -- (Grt.center);
      \foreach \u/\l in {v0/d, v3/b, v1/a} \draw[lightgray, line width=6pt] (\u.center) -- (Gl\l.center);
    \end{pgfonlayer}
  \end{tikzpicture}
  \hspace{3ex}
  \begin{tikzpicture}[xscale=.6, yscale=.5]
    \node at (0,-7.5) {(D)};
    \node[smallvertex] (Grt) at (0,0) {};
    \foreach [count=\i from 0] \s/\l in {/, /, reti/}
      \node[vertex, small\s, label=left:$\l$] (v\i) at ($(Grt)+(-90:{sqrt(2)*(\i+1)})$) {};
    \nextnode[vertex]{v3}{v1}{-135:2}{revarc}
    \nextnode[leaf, label=below:$a$]{Gla}{v3}{-135:2}{revarc}
    \nextnode[leaf, label=below:$b$]{Glb}{v3}{-90:{sqrt(2)}}{revarc}
    \nextnode[leaf, label=below:$c$]{Glc}{v2}{-90:{sqrt(2)}}{revarc}
    \nextnode[leaf, label=below:$d$]{Gld}{v0}{-67.5:{3*sqrt(2)/sin(67.5)}}{revarc}
    \foreach \u/\v/\b/\s in {Grt/v0/0/, Grt/v1/18/, v0/v2/-18/, v1/v2/0/}
      \draw[arc, \s] (\u) to[bend right=\b] (\v);
    \begin{pgfonlayer}{background}
      \draw[lightgray, line width=6pt] (Glc.center) -- (Grt.center);
      \foreach \u/\l in {v0/d, v3/b, v1/a} \draw[lightgray, line width=6pt] (\u.center) -- (Gl\l.center);
    \end{pgfonlayer}
    \draw[ultra thick, red!80!black] (-1.6,-3.8) -- (-0.6,-3.8);
    \draw[ultra thick, red!80!black] (-0.5,-3.8) -- (0.5,-3.8);
    \draw[ultra thick, red!80!black] (0.6,-3.8) -- (1.6,-3.8);
  \end{tikzpicture}}
  \caption{(A) Example network $N$ that has scanwidth $2$.
  (B) Possible tree extension $\Gamma$ of $N$.
  The arcs of $\Gamma$ are thick and gray.
  (C) Tree extension $\Gamma$ with the arcs of $N$ drawn in black.
  (D) Snapshot of scanner lines moving along $\Gamma$.
  To give an example,
  the two arcs of $N$ cut by the central scanner line just above the reticulation $r$ constitute the set $\GW^N_r (\Gamma)$.
  The set $\HW^N_u (\Gamma)$ consists of the three arcs of $N$ cut by the left and the central scanner line.}\label{fig:sw}
\end{figure}

Scanwidth can be visualized via ``scanner lines'',
explaining its name:
First, lay out the vertices of $D$ according to $\Gamma$,
where the root of $\Gamma$ is at the top and the leaves of $\Gamma$ are at the bottom.
Second, draw the arcs of $D$.
The definition of a tree extension ensures that all arcs of $D$ now point downward.
Third, imagine horizontal scanner lines,
one above each leaf of $\Gamma$.
Every scanner line cuts through some set of arcs of $D$.
We observe which arcs are cut as a scanner line moves through the drawing from bottom to top,
where distinct scanner lines merge into one scanner line once they meet at a vertex.
Indeed, the arcs cut by a scanner line that is directly above a vertex $t \in V(\Gamma)$
are precisely those of the set $\GW^D_t (\Gamma)$.
We will also be interested in the arcs that are cut by scanner lines directly \emph{below} the vertex $t$.
These are given by the set $\HW^D_t (\Gamma) \ceq \{(u, v) \in A(D) \mid u \geq_\Gamma t >_\Gamma v\}$.
The scanner-line visualization is depicted in \cref{fig:sw}.

For the purpose of dynamic programming,
it will be helpful to assume that we are given a tree extension with special structure:
If $D$ is rooted, then a \emph{canonical} tree extension $\Gamma$ of $D$
is such that $D[\{v \in V(D) \mid t \geq_\Gamma v\}]$ is weakly connected for all $t \in V(\Gamma)$~\cite{Hol23};
that is, the part of $D$ represented by the vertices of the subtree of $\Gamma$ that is rooted at $t$
is always weakly connected.
This is desirable because it guarantees that
$L(\Gamma) = L(D)$ and $\outdeg_\Gamma (v) \leq \outdeg_D (v)$ for each $v \in V(D)$~\cite{BSW20}.
In particular, if $D$ is binary, then so is $\Gamma$.
These properties hold in a canonical tree extension
because a vertex $v$ with greater out-degree in $\Gamma$ than in $D$
would be the root of a subtree of $\Gamma$
corresponding to a part of $D$ that is not weakly connected.
The assumption that a given tree extension is canonical comes at no cost with respect to scanwidth,
and only minor cost in running time:
Given a non-canonical tree extension $\Gamma$ of a rooted DAG $D$,
a canonical tree extension $\Gamma'$ of $D$ with $\sw_D (\Gamma') \leq \sw_D (\Gamma)$
can be constructed in polynomial time~\cite{Hol23}.

Assuming that no ambiguity arises, we may omit sub- or superscripts.

\section{Deciding Soft Tree Containment}

In this paper, we develop a parameterized algorithm for \STC{}.
Informally speaking, this problem, first considered by Bentert, Mal\'ik, and Weller~\cite{BMW18}, asks us to decide
if a given phylogenetic tree can be embedded in a given phylogenetic network
while taking into account uncertainty in the form of soft polytomies (high-degree vertices to be resolved into a binary tree).
We build up to its formal definition (\cref{prob:stc}) in a way that differs slightly from the original.
First, we adapt the notion of \emph{splitting} \cite{BW21,JM21}.

\begin{definition}
  Let $N$ be any network,
  let $v \in V(N)$ have out-degree at least $3$,
  and let $u, w \in \outV_N (v)$ be two distinct children of $v$.
  Then, \emph{out-splitting} $u$ and $w$ from $v$
  produces the network $N' \ceq (V(N) \uplus \{x\}, A)$,
  where $x \notin V(N)$ is a new vertex and
  $A \ceq (A(N) \setminus \{(v, u), (v, w)\}) \cup \{(v, x), (x, u), (x, w)\}$.
\end{definition}

\noindent
Exhaustive out-splitting enables us to resolve a high-out-degree tree vertex into a binary out-tree.
Its dual for high-in-degree reticulations is defined below.
The definitions are illustrated in \cref{fig:binres}.

\begin{definition}
  Let $N$ be any network,
  let $v \in V(N)$ have in-degree at least $3$,
  and let $u, w \in \inV_N (v)$ be two distinct parents of $v$.
  Then, \emph{in-splitting} $u$ and $w$ from $v$
  produces the network $N' \ceq (V(N) \uplus \{x\}, A)$,
  where $x \notin V(N)$ is a new vertex and
  $A \ceq (A(N) \setminus \{(u, v), (w, v)\}) \cup \{(u, x), (w, x), (x, v)\}$.
\end{definition}

\begin{figure}[t]
  \centering
  \scalebox{.89}{\begin{tikzpicture}[xscale=.8, yscale=.6]
    \node at (0,-6) {(A)};
    \node[smallvertex] (00) at (0,0) {};
    \nextnode[vertex]{000}{00}{-135:2}{revarc}
    \nextnode[leaf, label=below:$a$]{l0}{000}{-90:{2*sqrt(2)}}{revarc}

    \nextnode[reti]{0001}{000}{-45:2}{revarc}
    \nextnode[leaf, label=below:$b$]{l1}{0001}{-90:{sqrt(2)}}{revarc}

    \nextnode[vertex]{001}{00}{-45:2}{revarc}
    \nextnode[leaf, label=below:$c$]{l2}{001}{-90:{2*sqrt(2)}}{revarc}

    \draw[arc] (001) -- (0001);
    \draw[arc] (00) -- (0001);
  \end{tikzpicture}
  \hspace{12ex}
  \begin{tikzpicture}[xscale=.8, yscale=.6]
    \node at (0,-6) {(B)};
    \node[smallvertex] (00) at (0,0) {};
    \nextnode[vertex]{00x}{00}{-45:1}{revarc}
    \nextnode[vertex]{000}{00}{-135:2}{revarc}
    \nextnode[leaf, label=below:$a$]{l0}{000}{-90:{2*sqrt(2)}}{revarc}

    \nextnode[reti]{0001}{000}{-45:2}{revarc}
    \nextnode[leaf, label=below:$b$]{l1}{0001}{-90:{sqrt(2)}}{revarc}

    \nextnode[vertex]{001}{00x}{-45:1}{revarc}
    \nextnode[leaf, label=below:$c$]{l2}{001}{-90:{2*sqrt(2)}}{revarc}

    \draw[arc] (001) -- (0001);
    \draw[arc] (00x) -- (0001);
  \end{tikzpicture}
  \hspace{12ex}
  \begin{tikzpicture}[xscale=.8, yscale=.6]
    \node at (0,-6) {(C)};
    \node[smallvertex] (00) at (0,0) {};
    \nextnode[vertex]{00x}{00}{-45:1}{revarc}
    \nextnode[vertex]{000}{00}{-135:2}{revarc}
    \nextnode[leaf, label=below:$a$]{l0}{000}{-90:{2*sqrt(2)}}{revarc}

    \nextnode[reti]{0001x}{000}{-45:1}{revarc}
    \nextnode[reti]{0001}{0001x}{-45:1}{revarc}
    \nextnode[leaf, label=below:$b$]{l1}{0001}{-90:{sqrt(2)}}{revarc}

    \nextnode[vertex]{001}{00x}{-45:1}{revarc}
    \nextnode[leaf, label=below:$c$]{l2}{001}{-90:{2*sqrt(2)}}{revarc}

    \draw[arc] (001) -- (0001);
    \draw[arc] (00x) -- (0001x);
  \end{tikzpicture}}
  \caption{(A) Non-binary network $N$.
  (B) Possible binary out-resolution of $N$, obtained through out-splitting below the root.
  (C) Possible binary resolution of $N$, obtained through in-splitting and out-splitting.}\label{fig:binres}
\end{figure}

We now introduce terminology for partially and fully resolved networks.
Note that we use a different definition of \emph{binary resolution} than Bentert, Mal\'ik, and Weller \cite{BMW18}.
This avoids corner cases violating the intuition behind soft polytomies.

\begin{definition}
  Let $N$ be a network.
  Any network that is obtained from $N$ through exhaustive out-splitting
  is called a \emph{binary out-resolution} of $N$,
  and any network that is obtained from $N$ through exhaustive in-splitting
  is called a \emph{binary in-resolution} of $N$.
  A \emph{binary resolution} of $N$ is a binary out-resolution of a binary in-resolution of $N$.
\end{definition}

\noindent
Note that, to obtain a binary resolution,
it does not matter whether in-splitting is performed \emph{before} out-splitting or \emph{after};
the order is arbitrary.

When discussing firm and soft display,
we want to treat leaves as labeled and all other vertices as unlabeled.
This is achieved via the following notion of an isomorphism
that agrees with the identity function on all leaves.

\begin{definition}
  Let $D$ and $D'$ be DAGs with $L(D) = L(D')$.
  Let $\iota \colon V(D) \to V(D')$ be an isomorphism between $D$ and $D'$
  with $\iota |_{L(D)} = \id_{L(D)}$,
  where $\iota |_{L(D)} \colon L(D) \to V(D')$ denotes the restriction of $\iota$ to $L(D)$.
  Then, $\iota$ is called \emph{leaf-respecting}.
\end{definition}

At this point, we have everything needed to define the central notions of display for this work.
Recall that, when taking a leaf-monotone subdigraph, no new leaves can be created.

\begin{definition}[\cite{BMW18}]\label{def:fdsd}
  Let $N$ be any network and let $T$ be any tree.
  We say that $N$ \emph{firmly displays} $T$
  if there is a leaf-respecting isomorphism
  between a subdivision of $T$ and a leaf-monotone subdigraph of $N$.
  We say that $N$ \emph{softly displays} $T$
  if there is a binary resolution $N'$ of $N$
  as well as a binary resolution $T'$ of $T$
  such that $N'$ firmly displays $T'$.
\end{definition}

We are now ready to define the problem under consideration.
Note that, for technical reasons, we restrict the input slightly more than Bentert, Mal\'ik, and Weller \cite{BMW18};
in particular, networks in our sense cannot contain
any vertex that has both in-degree and out-degree $1$.
We believe the problem to become computationally harder with such vertices permitted,
despite their insignificance in practice.

\problemdefref{Soft Tree Containment}{stc}
{A phylogenetic network $N$ and a phylogenetic tree $T$ with $L(T) \subseteq L(N)$.}
{Does $N$ softly display $T$?}
{\cite{BMW18}}

\noindent
We solve this problem in two steps, corresponding to the next two sections.
In \cref{sec:binnet}, we consider the special case where the given network~$N$ is binary.
In \cref{sec:arbnet}, we describe a reduction mapping an arbitrary network to a binary one.

\begin{figure}[t]
  \centering
  \scalebox{.89}{\begin{tikzpicture}[xscale=.6, yscale=.5]
    \node at (0,-6) {(A)};
    \node[smallvertex] (00) at (0,0) {};
    \nextnode{000}{00}{-135:2}{revarc, bold, orange}
    \nextnode[vertex, xshift=6pt]{0000}{000}{-135:2}{revarc, bold, magenta}
    \nextnode[leaf, xshift=12pt, label=below:$a$]{l0}{0000}{-135:2}{revarc, bold, magenta}
    \nextnode[leaf, xshift=-12pt, label=below:$b$]{l1}{0000}{-45:2}{revarc, bold, darkgreen}

    \nextnode[reti]{0001}{000}{-45:2}{revarc, bold, blue}
    \nextnode[leaf, label=below:$c$]{l2}{0001}{-90:{sqrt(2)}}{revarc, bold, blue}

    \nextnode{001}{00}{-45:2}{revarc}
    \nextnode[leaf, label=below:$d$]{l3}{001}{-90:{2*sqrt(2)}}{revarc}

    \draw[arc] (001) -- (0001);

    \pgfmathanglebetweenpoints{\pgfpointanchor{000}{center}}{\pgfpointanchor{0000}{center}}
    \let\angleofarc\pgfmathresult
    \begin{scope}
      \clip ($(000)+(90+\angleofarc:0.001)$) -- ($(0000)+(90+\angleofarc:0.001)$) --
        ($(0000)+(90+\angleofarc:1)$) -- ($(000)+(90+\angleofarc:1)$) -- cycle;
      \draw[arc, bold, darkgreen] (000) -- (0000);
    \end{scope}
  \end{tikzpicture}
  \hspace{12ex}
  \begin{tikzpicture}[xscale=.32, yscale=.5]
    \node at (0,-6) {(B)};
    \node[smallvertex] (00) at (0,0) {};
    \nextnode{000}{00}{-135:2}{revarc, bold, orange}
    \nextnode[leaf, label=below:$a$]{l0}{000}{-135:4}{revarc, bold, magenta}
    \nextnode[leaf, label=below:$b$]{l1}{000}{-90:{2*sqrt(2)}}{revarc, bold, darkgreen}
    \nextnode[leaf, label=below:$c$]{l2}{000}{-45:4}{revarc, bold, blue}
    \nextnode[leaf, label=below:$d$]{l3}{00}{-45:6}{revarc}
  \end{tikzpicture}
  \hspace{12ex}
  \begin{tikzpicture}[xscale=.32, yscale=.5]
    \node at (0,-6) {(C)};
    \node[smallvertex] (00) at (0,0) {};
    \nextnode{001}{00}{-45:2}{revarc}
    \nextnode[leaf, label=below:$a$]{l0}{00}{-135:6}{revarc}
    \nextnode[leaf, label=below:$b$]{l1}{001}{-135:4}{revarc}
    \nextnode[leaf, label=below:$c$]{l2}{001}{-90:{2*sqrt(2)}}{revarc}
    \nextnode[leaf, label=below:$d$]{l3}{001}{-45:4}{revarc}
  \end{tikzpicture}}
  \caption{The network $N$ depicted in (A) softly displays the tree $T$ depicted in (B)\@;
  they are a $\YES$-instance of \STC{}.
  However, $N$ does not softly display the tree depicted in (C)\@.
  In (A) and (B)\@, the colors of arcs indicate how to embed $T$ into $N$.
  This intuition will be formalized as follows:
  The colored arcs of $T$ form a \emph{downward-closed subforest}
  whose \emph{top-arc set} consists of the orange arc.
  Each arc of this downward-closed subforest is mapped to the path of the corresponding color in $N$
  by a \emph{soft pseudo-embedding}.
  The magenta path and the green path in $N$ are not arc-disjoint;
  however, they are \emph{eventually arc-disjoint}.
  (This is enabled by (SPE\ref{spe:fork-eventually-disjoint}) in \cref{def:spe}.)}\label{fig:stc}
\end{figure}

A $\YES$-instance and a $\NO$-instance of \STC{} are depicted in \cref{fig:stc}.
The figure also illustrates some of the notions defined and used in \cref{sec:binnet}.

\subsection{Soft Display by Binary Networks}\label{sec:binnet}

Van Iersel, Jones, and Weller~\cite{IJW26} present a dynamic-programming algorithm
that uses tree extensions, thereby leveraging low scanwidth, to solve the \TC{} problem,
which asks for firm rather than soft containment.
In the following, we adapt their algorithm for our purposes.
Although our approach closely follows theirs,
the presentation we give here is self-contained.
We will highlight both similarities and differences between the two approaches.
To start with, we use their definition of a \emph{downward-closed subforest} and of \emph{top arcs} in $T$,
which are mapped into the ``bags'' of a given tree extension of $N$.

In contrast to the definition of a phylogenetic network/tree,
it will be more convenient to work with $N$ and $T$ whose roots have out-degree~$1$.
This is for technical reasons, a standard practice, and does not limit practical applicability.
We also assume to be given a tree extension, ideally of low scanwidth, which might be more problematic in practice,
but it allows us to separate the problem of computing/approximating low-scanwidth tree extensions from the
problem of deciding soft containment.
The following assumption formally introduces these objects,
but its scope is restricted to this section and we drop it for \cref{sec:arbnet}.

\begin{assumption}\label{as:NTGamma}
  Let $N^*$ be any binary phylogenetic network,
  and let $T^*$ be an arbitrary phylogenetic tree with $L(T^*) = L(N^*)$.
  We construct the rooted DAG
  $N \ceq (V(N^*) \uplus \{\rho_N\},\allowbreak A(N^*) \cup \{(\rho_N, \getroot(N^*))\})$
  by attaching a new root $\rho_N \notin V(N^*)$, and we construct the out-tree
  $T \ceq (V(T^*) \uplus \{\rho_T\},\allowbreak A(T^*) \cup \{(\rho_T, \getroot(T^*))\})$
  by attaching a new root $\rho_T \notin V(T^*)$.
  Also, let $\Gamma$ be a canonical tree extension of $N$.
\end{assumption}

We generalize the strict reachability relation defined earlier
in order to define reachability between arcs and vertices/arcs.
The new reachability relation agrees with the old relation when comparing two vertices.

\begin{definition}[\cite{IJW26}]
  Let $D$ be a DAG\@,
  $u, v \in V(D)$ vertices,
  and $(w, x), (y, z) \in A(D)$ arcs.
  Then, the relation ${>_D} \subseteq (V(D) \uplus A(D))^2$ is defined as follows:
  \begin{enumerate}
    \item We have $u >_D v$ if and only if $u \geq_D v$ and $u \neq v$,
    \item we have $(w, x) >_D (y, z)$ if and only if $x \geq_D y$,
    \item we have $u >_D (y, z)$ if and only if $u \geq_D y$, and
    \item we have $(w, x) >_D v$ if and only if $x \geq_D v$ (read: $v$ is \emph{reachable} from $(w, x)$).
  \end{enumerate}
\end{definition}

\noindent
This relation lets us describe parts of $N$ or $T$ that are ``below'' a specified set of arcs.
In particular, our dynamic-programming table will indicate whether the parts of $T$ that are
below an anti-chain of arcs (called ``top arcs'') can be embedded below a set of arcs in $N$
(see \cref{fig:stc}).

\begin{definition}[\cite{IJW26}]
A \emph{downward-closed subforest} of $T$ is a subdigraph $F$ of~$T$
that does not contain isolated vertices
and whose arc set is such that,
for all arcs $a \in A(F)$ and $b \in A(T)$, if $a >_T b$, then $b \in A(F)$.
An arc $(x, y) \in A(F)$ is called a \emph{top arc} of $F$ if $\indeg_F (x) = 0$.
The set $S \subseteq A(F)$ that contains exactly the top arcs of $F$
is called a \emph{top-arc set} for $L(F)$,
and $F$ is called the \emph{forest below} $S$.
\end{definition}

In contrast to firm containment,
soft containment allows embedding distinct arcs~$(x, y)$ and~$(x, z)$ of~$T$ as
paths in $N$ that may start with the same arcs but separate further down.
We call such paths \emph{eventually arc-disjoint}.

\begin{definition}
Two distinct paths $P, Q \subseteq N$ are called \emph{eventually arc-disjoint} if
they have length at least~$1$ and
(a) $P$ and $Q$ are arc-disjoint or
(b) $P$ and $Q$ start at the same vertex $v \in V(N)$ and $P - v$ and $Q - v$ are eventually arc-disjoint.
\end{definition}

\noindent
We remark that the union of several pairwise eventually arc-disjoint paths that all start with the same vertex
can be used to form an out-tree.
To witness soft containment, we associate each vertex of~$T$ with such a collection of paths in~$N$.
Formalizing this intuition,
we adapt the notion of a \emph{pseudo-embedding}~\cite{IJW26} to work with eventually arc-disjoint paths.
Then, we show that such an embedding of $T$ into $N$ witnesses $N^*$ softly displaying~$T^*$.

\begin{definition}\label{def:spe}
  A \emph{soft pseudo-embedding} of a downward-closed subforest~$F$ of $T$
  into the rooted DAG~$N$ is a function~$\phi$ that maps every arc~$(x, y) \in A(F)$
  to some directed path $\phi(x, y) \subseteq N$ of length at least $1$ in $N$
  and satisfies the following:
  \begin{enumerate}[(SPE1)]
    \item\label{spe:last-to-first} For any two arcs of the form $(x, y), (y, z) \in A(F)$,
      the last vertex of $\phi(x, y)$ is the first vertex of $\phi(y, z)$.
    \item\label{spe:paths-disjoint} For all arcs $(x, y), (x', y') \in A(F)$ with $x \neq x'$,
      the two directed paths $\phi(x, y)$ and $\phi(x', y')$ are arc-disjoint.
    \item\label{spe:fork-eventually-disjoint} For all arcs $(x, y), (x, y') \in A(F)$ with $y \neq y'$,
      the two directed paths $\phi(x, y)$ and $\phi(x, y')$ are eventually arc-disjoint.
    \item\label{spe:leaf} For every arc $(x, \ell) \in A(F)$ with $\ell \in L(T)$,
      the last vertex of $\phi(x, \ell)$ is $\ell$.
  \end{enumerate}
  Also, for any arc $(x, y) \in A(F)$, we use $\phi(y)$ to denote the last vertex of $\phi(x, y)$.
\end{definition}

\noindent
If there is a pseudo-embedding (in the sense of Van Iersel, Jones, and Weller~\cite{IJW26})
of a downward-closed subforest of $T$ into $N$,
then it is a \emph{soft} pseudo-embedding as well,
but not vice versa.
We have to use this more general notion because \STC{} is intentionally more permissive than \TC{}.
In fact, handling these generalized embeddings is the primary novelty of the dynamic-programming algorithm
that we develop in this section (\cref{alg:stc}):
We inherit the idea of computing embeddings of parts of $T$ that are below a top-arc set into parts of $N$,
where we only store how each top arc is to be embedded in a ``bag'' of $\Gamma$,
but the details differ significantly for \STC{}.
Crucially, a soft pseudo-embedding can map two distinct arcs to paths that share some arcs;
the paths might only be \emph{eventually} arc-disjoint.
This changes how we generate the embeddings and, thus, changes the analysis.
Moreover, our generalized embeddings can be more numerous
(compare \cref{lem:vsigbound} with \cite[Lemma 9]{IJW26}).
We prove now that these generalized embeddings indeed capture \STC{}.

\begin{restatable}{lemma}{lemsdiffspe}\label{lem:sdiffspe}
The binary network $N^*$ softly displays the tree $T^*$
if and only if there is a soft pseudo-embedding of $T$ into $N$.
\end{restatable}
\begin{proof}\label{pr:lemsdiffspe}
There are two directions to consider.

\proofright{}
Assume that $N^*$ softly displays~$T^*$.
There exists a leaf-respecting isomorphism~$\iota$
between a subdivision~$T^\dagger$ of a binary resolution of $T^*$
and a leaf-monotone subdigraph $N^\dagger$ of~$N^*$.
Note that, since $N^*$ is binary, it is its own unique binary resolution.
We construct a soft pseudo-embedding $\phi$ of $T$ into $N$ as follows:
\begin{enumerate}
\item For every $(x, y) \in A(T^*)$,
we define $\phi(x, y)$ as the path from $\iota(x)$ to $\iota(y)$ in $N^\dagger$.
\item We define $\phi(\rho_T, \getroot(T^*))$ as a path from $\rho_N$ to $\iota(\getroot(T^*))$ in $N$.
\end{enumerate}
This defines $\phi$ for all arcs of $T$ because $A(T) = A(T^*) \cup \{(\rho_T, \getroot(T^*))\}$.
Moreover, $\phi$ maps each arc of $T$ to a directed path of length at least $1$ in $N$.

We show that (SPE\ref{spe:last-to-first}) is satisfied:
Let $(x, y), (y, z) \in A(T)$.
By construction of $\phi$, the last vertex of $\phi(x, y)$ is $\iota(y)$.
Since $x$ is a parent of $y$ in $T$, we have $y \neq \rho_T$.
Thus, the first vertex of $\phi(y, z)$ is also $\iota(y)$.

We show that (SPE\ref{spe:paths-disjoint}) is satisfied:
Let $(x, y), (x', y') \in A(T)$ with $x \neq x'$.
Because $T$ is an out-tree, we have $y \neq y'$.
The case where $x = \rho_T$ or $x' = \rho_T$ is easy.
Hence, we now assume that $(x, y), (x', y') \in A(T^*)$.
Then, $\phi(x, y), \phi(x', y') \subseteq N^\dagger$.
To infer that $\phi(x, y)$ and $\phi(x', y')$ are arc-disjoint,
it is sufficient to argue that,
in $T^\dagger$, the path from $x$ to $y$ and the path from $x'$ to $y'$ are arc-disjoint,
as $\iota$ is a leaf-respecting isomorphism between $T^\dagger$ and $N^\dagger$.
In $T^*$, the path from $x$ to $y$ and the path from $x'$ to $y'$ are clearly arc-disjoint.
When out-splitting or subdividing, the corresponding paths continue to be arc-disjoint,
as a consequence of $x \neq x'$.

We show that (SPE\ref{spe:fork-eventually-disjoint}) is satisfied:
Let $(x, y), (x, y') \in A(T)$ with $y \neq y'$.
Since $y \neq y'$, we know that $x \neq \rho_T$.
Thus, we have $(x, y), (x, y') \in A(T^*)$, and so $\phi(x, y), \phi(x, y') \subseteq N^\dagger$.
It then suffices to argue that,
in $T^\dagger$, the path from $x$ to $y$ and the path from $x$ to $y'$ are eventually arc-disjoint.
This holds in $T^*$ and continues to hold when out-splitting or subdividing.

We show that (SPE\ref{spe:leaf}) is satisfied:
Let $(x, \ell) \in A(T)$ with $\ell \in L(T)$.
By construction of $\phi$, the last vertex of $\phi(x, \ell)$ is $\iota(\ell) = \ell$,
using the fact that $\iota$ is leaf-respecting.

\proofleft{}
Let $\phi$ be a soft pseudo-embedding of $T$ into $N$.
We perform the following steps:
\begin{enumerate}
\item We construct the out-tree $N' \ceq \bigcup_{a \in A(T)} \phi(a)$.
\item We construct the set $R \ceq \{v \in V(N') \mid \forall u \geq_{N'} v \colon \outdeg_{N'} (u) = 1\}$.
\item We construct the out-tree $N'' \ceq N' - R$.
\end{enumerate}
It is easy to see that $N''$ is a leaf-monotone subdigraph of $N^*$.
We argue that there is a leaf-respecting isomorphism
between a subdivision of a binary resolution of $T^*$ and $N''$.
To this end, we define a function~$\tau$ mapping each non-leaf~$x$ of $T^*$
to the binary out-tree to which the outgoing arcs of $x$ in $T^*$ are mapped by~$\phi$.
Then, each $\tau(x)$ can be obtained by out-splitting and subdividing around~$x$ in $T^*$,
except when $x$ is the root of $T^*$;
we have to handle the root separately using the set $R$.

Formally, for every vertex $x \in V(T^*) \setminus L(T^*)$,
we define $\tau(x) \ceq \bigcup_{a \in \outA_{T^*} (x)} \phi(a)$,
and note that $(\bigcup_{x \in V(T^*) \setminus L(T^*)} \tau(x)) - R = N''$.
Let $x, x' \in V(T^*) \setminus L(T^*)$ be distinct.
The constraint (SPE\ref{spe:paths-disjoint}) implies that $\tau(x)$ and $\tau(x')$ are arc-disjoint.
By (SPE\ref{spe:fork-eventually-disjoint}), we know that $\tau(x)$ is an out-tree.
It is binary, its root is $\phi(x)$, and its leaves are $L(\tau(x)) = \{\phi(y) \mid y \in \outV_{T^*} (x)\}$.
Starting from $T^*$, for every $x \in (V(T^*) \setminus L(T^*)) \setminus \{\getroot(T^*)\}$,
an out-tree corresponding to $\tau(x)$ can be constructed through out-splitting and subdividing;
for $x = \getroot(T^*)$, an out-tree corresponding to $\tau(x) - R$ can be constructed.
\end{proof}

\begin{figure}[t]
  \centering
  \begin{tikzpicture}[xscale=.5, yscale=.6]
    \tikzstyle{B1}=[bold, magenta]
    \tikzstyle{B2}=[bold, darkgreen]
    \tikzstyle{B3}=[bold, blue]
    \tikzstyle{fade}=[opacity=.3]
    \node[smallvertex, fade] (rt) at (0,0) {};
    \nextnode[vertex]{0}{rt}{-90:{sqrt(2)}}{revarc, fade}
    \nextnode[vertex, label=left:$v$]{00}{0}{-135:2}{revarc, B1}
    \nextnode[vertex]{01}{0}{-45:2}{revarc, B3}
    \nextnode[leaf, label=below:$a$]{l0}{00}{-135:2}{revarc}
    \nextnode[leaf, label=below:$c$]{l2}{01}{-45:2}{revarc}
    \nextnode[reti]{001}{00}{-45:2}{revarc}
    \nextnode[leaf, label=below:$b$]{l1}{001}{-90:{sqrt(2)}}{revarc}
    \foreach \u/\v/\s in {01/001/} \draw[arc, \s] (\u) -- (\v);
    \begin{scope}
      \clip ($(0)+(-45:0.001)$) -- ($(00)+(-45:0.001)$) -- ($(00)+(-45:1)$) -- ($(0)+(-45:1)$) -- cycle;
      \draw[arc, B2] (0) -- (00);
    \end{scope}
    \coordinate (cutoff) at ($(00)!.5!(0)$);
    \draw[dashed] ($(cutoff)+(180-10:1)$) --++(0:3.6) node[anchor=west] {$B = \GW_v (\Gamma)$};

    \node[smallvertex] (Grt) at (10,0) {};
    \foreach [count=\i from 0] \s/\l in {/, /v, /, reti/}
      \node[vertex, small\s, label=left:$\l$] (v\i) at ($(Grt)+(-90:{sqrt(2)*(\i+1)})$) {};
    \foreach \n/\p in {c/v2}{\nextnode[leaf, label=below:$\n$]{Gl\n}{\p}{-45:2}{revarc}}
    \nextnode[leaf, label=below:$a$]{Gla}{v1}{-135:2}{revarc}
    \nextnode[leaf, label=below:$b$]{Glb}{v3}{-90:{sqrt(2)}}{revarc}
    \foreach \u/\v/\b/\s in {Grt/v0/0/, v0/v1/0/B1, v0/v2/-30/B3, v1/v3/30/, v2/v3/0/}
      \draw[arc, \s] (\u) to[bend right=\b] (\v);
    \begin{scope}
      \clip ($(v0)+(180:-0.001)$) -- ($(v1)+(180:-0.001)$) -- ($(v1)+(180:-1)$) -- ($(v0)+(180:-1)$) -- cycle;
      \draw[arc, B2] (v0) -- (v1);
    \end{scope}
    \begin{pgfonlayer}{background}
      \draw[lightgray, line width=6pt] (Glb.center) -- (Grt.center);
      \foreach \u/\l in {v2/c, v1/a} \draw[lightgray, line width=6pt] (\u.center) -- (Gl\l.center);
    \end{pgfonlayer}

    \node[smallvertex, fade] (Trt) at (16,-2) {};
    \nextnode{w0}{Trt}{-90:{sqrt(2)}}{revarc, fade}
    \foreach \n/\p/\a/\s/\ns in {a/w0/90/B1/, b/w0/45/B2/, c/w0/0/B3/}{
      \nextnode[leaf, \ns, label=below:$\n$]{Tl\n}{\p}{-45-\a:2}{revarc, \s}
    }
    \draw[dashed, rounded corners=6] ($(Tlc)+(45:1.1)$) -- ($(Tla)+(45:1.1)$) --++(-1.6,0) node[anchor=east] {$S$};
  \end{tikzpicture}
  \caption{A valid signature $[B, S, \psi]$, adapted from \cite{IJW26}.
  \textbf{Right:} A top-arc set $S$ in an out-tree $T$.
  The three arcs of $S$ have distinct colors.
  \textbf{Left:} A ``bag'' $B$ of arcs in $N$.
  The colors indicate which arc of $S$ is mapped to which arc in $B$ by $\psi$.
  Note that $\psi$ maps both the magenta and the green arc of $S$ to the same magenta--green arc in $B$.
  \textbf{Center:} A canonical tree extension $\Gamma$ of $N$ (arcs of $\Gamma$ depicted in gray)
  with the arcs of $N$ drawn in.
  The set $\GW_v (\Gamma)$ is the current choice for $B$.}\label{fig:vsig}
\end{figure}

Recall that we assume to be given a canonical tree extension~$\Gamma$ of $N$ (see \cref{as:NTGamma})
and that we define
$\HW_t (\Gamma) \ceq \{(u, v) \in A(N) \mid u \geq_\Gamma t >_\Gamma v\}$
for every $t \in V(\Gamma)$.
When visualizing scanwidth via scanner lines (see \cref{fig:sw}),
the set $\HW_t (\Gamma)$ may be thought of as containing the arcs of~$N$ cut directly below $t \in V(\Gamma)$,
whereas $\GW_t (\Gamma)$ contains the arcs cut directly above $t$.

In the following, we develop a dynamic-programming algorithm
that computes its table entries (``signatures'') bottom-up along $\Gamma$.
Intuitively, for each vertex~$t \in V(\Gamma)$, each top-arc set~$S$ in~$T$, and each $\psi \colon S \to \GW_t (\Gamma)$,
we set the table entry / signature~$[\GW_t (\Gamma), S, \psi]$ to ``valid''
if the forest below $S$ in $T$ is softly displayed by the network below $\GW_t (\Gamma)$ in such a way
that each arc $a \in S$ is embedded in a path starting with $\psi(a) \in \GW_t (\Gamma)$.
An example of a valid signature is depicted in \cref{fig:vsig}.
Note that $\psi$ is not required to be injective.

\begin{definition}
  A \emph{signature} is an ordered triple $[B, S, \psi]$
  containing a set $B \subseteq A(N)$ of arcs of $N$,
  a top-arc set $S \subseteq A(T)$ for the set of leaves reachable from $B$ in $N$,
  and a function $\psi \colon S \to B$.
  Such a signature $[B, S, \psi]$ is called \emph{valid}
  if there is a soft pseudo-embedding $\phi$ of the forest below $S$ into $N$ such that,
  for every arc $(x, y) \in S$,
  the first arc of the path $\phi(x, y)$ is $\psi(x, y)$;
  then, $\phi$ is a \emph{witness} for the valid signature $[B, S, \psi]$.
\end{definition}

Our dynamic-programming algorithm, \cref{alg:stc}, decides whether $N^*$ softly displays $T^*$ by,
in some sense,
exhaustively generating valid signatures.
Once it has generated them,
it can easily read off the final answer (\cref{lem:vsigtoparc}).
To prove the main result of this section (\cref{prop:stc}),
we establish the correctness of \cref{alg:stc} and bound its running time.
This involves showing that \cref{alg:stc} generates its valid signatures correctly (\cref{lem:vsigleaf,lem:vsigHW,lem:vsigGW})
and that there are not too many valid signatures to generate (\cref{lem:vsigbound}).
We start with the latter.

\begin{restatable}{lemma}{lemvsigbound}\label{lem:vsigbound}
  For each set $B \subseteq A(N)$,
  there are at most $(4 \cdot \card{B})^{\outDelta{T} \cdot \card{B}}$ valid signatures
  of the form $[B, S, \psi]$.
\end{restatable}
\begin{proof}\label{pr:lemvsigbound}
  Let $B \subseteq A(N)$ be a set of arcs.
  For all valid signatures of the form $[B, S, \psi]$,
  it holds that $\card{S} \leq \outDelta{T} \cdot \card{B}$,
  since $\card{\psi^{-1} (u, v)} \leq \outDelta{T}$
  for every arc $(u, v) \in B$.
  This is because, when $\psi$ maps multiple arcs of $S$ to a single arc in $B$,
  these arcs of $S$ must have the same tail in order to permit eventual arc-disjointness.
  Therefore, to construct a valid signature of the form $[B, S, \psi]$,
  the number of choices for $S$
  is at most~$4^{\outDelta{T} \cdot \card{B}}$:
  This bound on the number of top-arc sets of bounded size for a fixed set of leaves
  is derived from a bound on the number of \emph{important separators}
  due to Chitnis, Hajiaghayi, and Marx~\cite[Lemma~4.2]{CHM13}.
  Specifically, from every top-arc set $S$,
  an important $\{\rho_T\}$--$L(F)$ separator in a slight modification of $T$ can be obtained
  using the heads of the arcs in $S$,
  where $F$ is the forest below $S$ (see also \cite[Lemma 4]{IJW26}).
  For each choice of $S$,
  the number of choices for $\psi$
  is at most~$\card{B}^{\card{S}} \leq \card{B}^{\outDelta{T} \cdot \card{B}}$.
  Multiplying $4^{\outDelta{T} \cdot \card{B}}$ with $\card{B}^{\outDelta{T} \cdot \card{B}}$
  gives the promised bound.
\end{proof}

For every non-root vertex~$v \in V(\Gamma)$,
\cref{alg:stc} constructs the set~$\GWS_v$ of pairs $(S, \psi)$ for which $[\GW_v (\Gamma), S, \psi]$ is a valid signature
(and analogously for $\HWS_v$).
The construction has three steps:
First, handling the case where $v$ is a leaf.
Second, constructing $\HWS_v$ when $v$ is not a leaf.
Third, constructing $\GWS_v$ when $v$ is not a leaf.
The next three lemmas imply the steps' correctness.

\begin{restatable}{lemma}{lemvsigleaf}\label{lem:vsigleaf}
  For each leaf $v \in L(\Gamma)$,
  there is exactly one ordered pair $(S, \psi)$
  such that $[\GW_v (\Gamma), S, \psi]$ is a valid signature.
\end{restatable}
\begin{proof}\label{pr:lemvsigleaf}
  Let $v \in L(\Gamma) = L(N) = L(T)$ be a leaf.
  We observe that a signature of the form $[\GW_v (\Gamma), S, \psi]$ is valid
  if and only if
  (a) $S = \inA_T (v)$ and
  (b) $\psi$ maps the unique arc in~$S$ to the unique arc in $\GW_v (\Gamma) = \inA_N (v)$.
\end{proof}

Recall that $N$ is binary.
Since $\Gamma$ is canonical, it too is binary in consequence \cite{BSW20},
and so any vertex has at most two children in $\Gamma$.
For every non-root, non-leaf $v \in V(\Gamma)$,
\cref{alg:stc} constructs $\HWS_v$ by combining pairs from $\GWS_{q_1}$ and $\GWS_{q_2}$,
where $q_1$ and $q_2$ are the children of~$v$ in $\Gamma$.
We define $D_u \ceq D[\{w \in V(D) \mid u \geq_D w\}]$
for each~$u \in V(D)$ in a DAG~$D$,
to refer to subtrees of $T$ and $\Gamma$.
Recall that $\psi |_Z$ is the restriction of $\psi$ to $Z$.

\begin{restatable}{lemma}{lemvsigHW}\label{lem:vsigHW}
  Let $v \in (V(\Gamma) \setminus \{\getroot(\Gamma)\}) \setminus L(\Gamma)$
  have children~$\outV_\Gamma (v) = \{q_1, q_2\}$ in~$\Gamma$,
  where $q_1$ and $q_2$ are not necessarily distinct,
  and let~$[\HW_v (\Gamma), S, \psi]$ be a signature.
  Further, for each~$i \in \{1, 2\}$,
  let~$S_i \ceq \{(x, y) \in S \mid L(T_y) \subseteq L(\Gamma_{q_i})\}$.
  Then, the signature $[\HW_v (\Gamma), S, \psi]$ is valid
  if and only if
  (a)~$S = S_1 \cup S_2$ and
  (b)~$[\GW_{q_i} (\Gamma), S_i, \psi |_{S_i}]$ is a valid signature for each~$i \in \{1, 2\}$.
\end{restatable}
\begin{proof}\label{pr:lemvsigHW}
\proofright{}
Suppose that $[\HW_v (\Gamma), S, \psi]$ is valid,
and let $\phi$ be a witness for it.

First, we show $S \subseteq S_1 \cup S_2$.
Because $S_1 \cup S_2 \subseteq S$ by definition,
this then implies $S = S_1 \cup S_2$.
Let $(x, y) \in S$ be an arc of $T$ and
let $(u, w) \ceq \psi(x, y)$.
Since $(u, w) \in \HW_v (\Gamma)$,
we have $v >_\Gamma w$ and,
hence, $q_i \geq_\Gamma w$ for some $i \in \{1, 2\}$.
In the following, we prove that $L(T_y) \subseteq L(\Gamma_{q_i})$ and, thus, $(x, y) \in S_i$.
To this end,
let $\ell \in L(T_y)$ be a leaf and
let $P \subseteq T_y$ be the $y$--$\ell$ path in $T_y$.
Then, $\bigcup_{a \in A(P)} \phi(a)$ is a path from $\phi(y)$ to $\ell$ in $N$
unless $\phi(y) = \ell$.
Either way, $w \geq_N \phi(y) \geq_N \ell$,
so $q_i \geq_\Gamma w \geq_\Gamma \ell$ and, consequently, $\ell \in L(\Gamma_{q_i})$.

Second, let~$i \in \{1, 2\}$ and
let~$F \subseteq T$ be the forest below $S_i$.
Then, $\phi |_{A(F)}$ is a witness
for the valid signature $[\GW_{q_i} (\Gamma), S_i, \psi |_{S_i}]$.

\proofleft{}
We assume that $S = S_1 \cup S_2$
and that $[\GW_{q_i} (\Gamma), S_i, \psi |_{S_i}]$ is a valid signature for each~$i \in \{1, 2\}$, witnessed by $\phi_i$.
If $q_1 = q_2$,
then $\HW_v (\Gamma) = \GW_{q_1} (\Gamma)$, $S = S_1$, and $\psi = \psi |_{S_1}$,
so $[\HW_v (\Gamma), S, \psi]$ is valid.
If $q_1 \neq q_2$,
then $S_1 \cap S_2 = \nothing$ and $\phi \ceq \phi_1 \uplus \phi_2$ is a witness for $[\HW_v (\Gamma), S, \psi]$.
\end{proof}

\begin{algorithm}[tp]
\caption{Solve \STC{} on $N^*$ and $T^*$.}\label{alg:stc}
\begin{algorithmic}[1]
\Require $N$ (binary), $T$ (not necessarily binary), and $\Gamma$; see \cref{as:NTGamma}.
\Ensure If $N^*$ softly displays $T^*$, then $\YES$, otherwise $\NO$.
% \Statex
\Procedure{SolveSTC}{$N, T, \Gamma$}
\For{$v \in V(\Gamma) \setminus \{\getroot(\Gamma)\}$ in a bottom-up traversal of $\Gamma$}
	\State $\GWS_v \gets \nothing$;
	\State $\HWS_v \gets \nothing$;
	\If{$v \in L(\Gamma)$}
		\State $p \gets \text{parent of $v$ in $T$}$;
		\State $u \gets \text{parent of $v$ in $N$}$;
		\State $\GWS_v \gets \{(\{(p, v)\}, \psi)\}$, where $\psi(p, v) \ceq (u, v)$;
	\Else
		\State $\{q_1, q_2\} \gets \outV_\Gamma (v)$;
			\Comment{Note that $q_1$ and $q_2$ are not necessarily distinct.}
		\If{$q_1 = q_2$}
			\State $\HWS_v \gets \GWS_{q_1}$;
		\Else
			\For{$((S_1, \psi_1), (S_2, \psi_2)) \in \GWS_{q_1} \times \GWS_{q_2}$}\label{line:loop}
				\State Add $(S_1 \cup S_2, \psi_1 \cup \psi_2)$ to $\HWS_v$;
			\EndFor
		\EndIf
		\State \Call{PopulateGWS}{$N, T, \Gamma, v$};
			\Comment{Populate $\GWS_v$ using $\HWS_v$.}
	\EndIf
\EndFor
\State $v \gets \text{child of the root of $N$}$;
\State $a \gets \text{top arc of $T$}$;
\If{$\exists (S, \psi) \in \GWS_v \colon S = \{a\}$}
	\State \Return $\YES$;
\Else
	\State \Return $\NO$;
\EndIf
\EndProcedure
% \Statex
\Procedure{PopulateGWS}{$N, T, \Gamma, v$}
\For{$(S, \psi) \in \HWS_v$}
	\If{$\psi^{-1} (\outA_N (v)) = \nothing$}
		\State Add $(S, \psi)$ to $\GWS_v$;
	\Else
		\If{$\exists x \in V(T) \colon \psi^{-1} (\outA_N (v)) \subseteq \outA_T (x)$}
			\For{$u \in \inV_N (v)$}
				\State Define $\psi' \colon S \to \GW_v (\Gamma)$ such that
				\State $\psi'(x, y) \ceq (u, v)$ for every $(x, y) \in \psi^{-1} (\outA_N (v))$ and
				\State $\psi'(x', y') \ceq \psi(x', y')$ for every $(x', y') \in S \setminus \psi^{-1} (\outA_N (v))$;
				\State Add $(S, \psi')$ to $\GWS_v$;
			\EndFor
		\EndIf
		\If{$\exists y \in V(T) \setminus \{\getroot(T)\} \colon \psi^{-1} (\outA_N (v)) = \outA_T (y)$}
			\State $x \gets \text{parent of $y$ in $T$}$;
			\State $S' \gets (S \setminus \psi^{-1} (\outA_N (v))) \cup \{(x, y)\}$;
			\For{$u \in \inV_N (v)$}
				\State Define $\psi' \colon S' \to \GW_v (\Gamma)$ such that $\psi'(x, y) \ceq (u, v)$ and
				\State $\psi'(x', y') \ceq \psi(x', y')$ for every $(x', y') \in S \setminus \psi^{-1} (\outA_N (v))$;
				\State Add $(S', \psi')$ to $\GWS_v$;
			\EndFor
		\EndIf
	\EndIf
\EndFor
\EndProcedure
\end{algorithmic}
\end{algorithm}

The next lemma concerns the correctness of the procedure \textsc{PopulateGWS}\@.
Given a non-root, non-leaf vertex $v \in V(\Gamma)$,
this procedure constructs the set $\GWS_v$ based on the elements of $\HWS_v$.
Informally speaking, this step from $\HW_v (\Gamma)$ to $\GW_v (\Gamma)$ unlocks the incoming arcs of $v$ in $N$.
There are three ways in which these arcs may be used:
(1) not at all,
(2) to elongate paths of an existing embedding, or
(3) to enlarge the downward-closed subforest to be embedded.

\begin{restatable}{lemma}{lemvsigGW}\label{lem:vsigGW}
Let $v \in (V(\Gamma) \setminus \{\getroot(\Gamma)\}) \setminus L(\Gamma)$ be a vertex and
$[\GW_v (\Gamma), S, \psi]$ a signature.
Then, $[\GW_v (\Gamma), S, \psi]$ is valid
if and only if at least one of the three conditions below is met:
\begin{enumerate}[(1)]
\item $\psi(x, y) \notin \inA_N (v)$ for every $(x, y) \in S$,
and $[\HW_v (\Gamma), S, \psi]$ is a valid signature.
\item $(u, v) \in \psi(S)$ for exactly one vertex $u \in \inV_N (v)$;
further, it holds that
\begin{enumerate}[(2.1)]
\item $X \ceq \psi^{-1} (u, v) \subseteq \outA_T (x)$ for some $x \in V(T)$, and
\item $[\HW_v (\Gamma), S, \psi']$ is a valid signature,
where $\psi' \colon S \to \HW_v (\Gamma)$ is some function with
$\psi'(X) \subseteq \outA_N (v)$ and,
for all $(x', y') \in S \setminus X$,
we have $\psi'(x', y') = \psi(x', y')$.
\end{enumerate}
\item $\psi(x, y) \in \inA_N (v)$ for exactly one $(x, y) \in S$;
further, it holds that
\begin{enumerate}[(3.1)]
\item $y \notin L(T)$, and
\item $[\HW_v (\Gamma), S_{x, y}, \psi']$ is a valid signature,
where $S_{x, y} \ceq (S \setminus \{(x, y)\}) \cup \outA_T (y)$
and $\psi' \colon S_{x, y} \to \HW_v (\Gamma)$ is some function with
$\psi'(\outA_T (y)) \subseteq \outA_N (v)$ and,
for all $(x', y') \in S \setminus \{(x, y)\}$,
we have $\psi'(x', y') = \psi(x', y')$.
\end{enumerate}
\end{enumerate}
\end{restatable}
\begin{proof}\label{pr:lemvsigGW}
\proofright{}
Suppose that $[\GW_v (\Gamma), S, \psi]$ is valid,
and let $\phi$ be a witness for it.
Then, at least one of the following holds, and we consider them individually:
\begin{enumerate}[(a)]
\item $\psi(x, y) \notin \inA_N (v)$ for every arc $(x, y) \in S$.
\item $\psi(x, y) \in \inA_N (v)$ for some arc $(x, y) \in S$ with $\phi(y) \neq v$.
\item $\psi(x, y) \in \inA_N (v)$ for some arc $(x, y) \in S$ with $\phi(y) = v$.
\end{enumerate}

Case (a):
Because $\psi(S) \subseteq \GW_v (\Gamma) \setminus \inA_N (v) \subseteq \HW_v (\Gamma)$,
we know that $[\HW_v (\Gamma), S, \psi]$ is a valid signature
witnessed by $\phi$,
so (1) is met.

Case (b):
Based on $\phi$, we construct a new soft pseudo-embedding $\phi'$ of the forest below $S$ into $N$.
To this end, we distinguish between two kinds of arcs in $S$:
those that $\psi$ maps into $\inA_N (v)$
and those that $\psi$ maps to elsewhere.
Only on the former arcs is $\phi'$ defined differently than $\phi$.
Namely, for every such arc, we remove the first vertex of the path to which the arc is mapped.
We do this because $\inA_N (v)$ and $\HW_v (\Gamma)$ are disjoint.
We now describe the steps formally.
We observe that $\psi(S) \cap \inA_N (v)$ contains a single arc~$(u, v)$.
Consider the set $X \ceq \psi^{-1} (u, v)$, and note that $(x, y) \in X$ and $X \subseteq \outA_T (x)$.
For each arc $(x, z) \in X$,
we define $\phi'(x, z)$ as the path
obtained by removing $u$ from the path $\phi(x, z)$.
For each arc $(x', y') \in A(F) \setminus X$,
where $F$ is the forest below $S$,
we define $\phi'(x', y') \ceq \phi(x', y')$.
Further, we construct $\psi' \colon S \to \HW_v (\Gamma)$ by
setting $\psi'(a)$ to be the first arc of $\phi'(a)$ for all $a \in S$.
Then, $[\HW_v (\Gamma), S, \psi']$ is a valid signature,
and $\phi'$ is a witness for it.
We observe that (2) is met now.

Case (c):
We use that no arc of $S$ other than $(x, y)$ is mapped into $\inA_N (v)$ by $\psi$.
This lets us construct a new soft pseudo-embedding $\phi'$
by simply taking the restriction of $\phi$ to the arcs of the forest below $S_{x, y}$.
Formally, we do the following:
As $\phi(y) = v$ and $v \notin L(\Gamma) = L(N) = L(T)$,
we have $y \notin L(T)$.
The existence of an arc $(x', y') \in S \setminus \{(x, y)\}$ with $\psi(x', y') \in \inA_N (v)$
would lead to violations of (eventual) arc-disjointness,
so no such arc $(x', y')$ exists.
We construct $\phi' \ceq \phi |_{A(F)}$,
where $F$ is the forest below $S_{x, y}$,
and we define $\psi'(a)$ to be the first arc of $\phi'(a)$ for each $a \in S_{x, y}$.
Then, $\phi'$ is a witness for the valid signature $[\HW_v (\Gamma), S_{x, y}, \psi']$.
We can infer that (3) is met.

\medskip
\proofleft{}
We argue the three cases, assuming that (1), (2), or (3) holds, individually.

Case (1):
Let $\phi'$ be a witness for the valid signature $[\HW_v (\Gamma), S, \psi]$.
Then, $\phi'$ is a witness for $[\GW_v (\Gamma), S, \psi]$ as well,
proving its validity.

Case (2):
Let $\phi'$ be a witness for the valid signature $[\HW_v (\Gamma), S, \psi']$.
For each $(x, y) \in X$,
we define $\phi(x, y)$ to be the path
obtained by prepending the vertex $u$ and the arc $(u, v)$ to the path $\phi'(x, y)$.
For each $(x', y') \in A(F) \setminus X$,
where $F$ is the forest below $S$,
we define $\phi(x', y') \ceq \phi'(x', y')$.
Then, $\phi$ is a witness for the valid signature $[\GW_v (\Gamma), S, \psi]$.

Case (3):
Let $\phi'$ be a witness for the valid signature $[\HW_v (\Gamma), S_{x, y}, \psi']$.
We define $\phi(x, y)$ to be the length-$1$ path containing the arc $\psi(x, y)$,
and we define $\phi(x', y') \ceq \phi'(x', y')$ for every $(x', y') \in A(F)$,
where $F$ is the forest below $S_{x, y}$.
Then, $\phi$ witnesses the validity of $[\GW_v (\Gamma), S, \psi]$.
\end{proof}

While the previous lemmas established that \cref{alg:stc} constructs its valid signatures correctly
(that is, the $(S, \psi)$ pairs it generates correspond precisely to the valid signatures of interest),
the next lemma implies that its final answer is correct.

\begin{restatable}{lemma}{lemvsigtoparc}\label{lem:vsigtoparc}
  Let $v \in V(N)$ be the child of the root of $N$,
  and let $a \in A(T)$ be the top arc of $T$.
  Then,
  $N^*$ softly displays $T^*$
  if and only if
  there exists a valid signature of the form $[\GW_v (\Gamma), \{a\}, \psi]$.
\end{restatable}
\begin{proof}\label{pr:lemvsigtoparc}
  \proofright{}
  Suppose that $N^*$ softly displays $T^*$.
  From the proof of \cref{lem:sdiffspe},
  we deduce that there is a soft pseudo-embedding $\phi$ of $T$ into $N$
  such that the first arc of the path $\phi(a)$ is $(\rho_N, v)$.
  We define the function $\psi(a) \ceq (\rho_N, v)$.
  Then, $[\GW_v (\Gamma), \{a\}, \psi]$ is a valid signature,
  and $\phi$ is a witness for it.

  \proofleft{}
  Let $[\GW_v (\Gamma), \{a\}, \psi]$ be a valid signature,
  and let $\phi$ be a witness for it.
  Then, $\phi$ is a soft pseudo-embedding of $T$ into $N$.
  Thus, $N^*$ softly displays $T^*$ by \cref{lem:sdiffspe}.
\end{proof}

We now piece the previous lemmas together to conclude this section.

\begin{proposition}\label{prop:stc}
  Given $N$, $T$, and $\Gamma$ fulfilling \cref{as:NTGamma},
  \cref{alg:stc} decides whether $N^*$ softly displays~$T^*$
  in $\bigO^* (2^{\outDelta{T} \cdot (\sw(\Gamma) + 1) \cdot \log_2 (4 \cdot \sw(\Gamma) + 4)})$~time.
\end{proposition}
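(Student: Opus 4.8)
The plan is to treat \cref{alg:stc} as a bottom-up dynamic program along $\Gamma$ and to argue correctness and running time separately. The central invariant is that, once a vertex $v$ has been processed, $\GWS_v$ equals the set of pairs $(S,\psi)$ for which $[\GW_v(\Gamma),S,\psi]$ is valid, and, for non-leaf $v$, $\HWS_v$ equals the set of pairs $(S,\psi)$ for which $[\HW_v(\Gamma),S,\psi]$ is valid. I would prove this invariant by induction along the bottom-up traversal; the returned answer is then correct by \cref{lem:vsigtoparc}, applied at the child $v$ of the root of $N$ and the top arc $a$ of $T$, since the algorithm returns $\YES$ exactly when some $(\{a\},\psi)\in\GWS_v$ exists.

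The inductive step is essentially an assembly of the preceding lemmas. The base case is \cref{lem:vsigleaf}: for a leaf there is a unique valid pair, which is the one the algorithm writes. For a non-leaf $v$ with $\Gamma$-children $q_1,q_2$, building $\HWS_v$ is justified by \cref{lem:vsigHW}; when $q_1=q_2$ one checks directly that $\HW_v(\Gamma)=\GW_{q_1}(\Gamma)$ so that copying is correct, and when $q_1\neq q_2$ the enumerated unions $(S_1\cup S_2,\psi_1\cup\psi_2)$ are exactly the valid $\HW_v$-signatures, using that the $S_i$ are disjoint (leaf sets of distinct $\Gamma$-subtrees are disjoint) and that $\GW_{q_i}(\Gamma)\subseteq\HW_v(\Gamma)$. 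Building $\GWS_v$ from $\HWS_v$ via \textsc{PopulateGWS} is justified by \cref{lem:vsigGW}: its three branches (the incoming arcs of $v$ being unused, used to elongate a path, or used to grow the embedded subforest) implement conditions~(1)--(3) verbatim, using $\GW_v(\Gamma)=(\HW_v(\Gamma)\setminus\outA_N(v))\cup\inA_N(v)$ and ranging the choice of reconnecting arc over $\inV_N(v)$.

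For the running time, the governing quantity is the table size, which I bound through \cref{lem:vsigbound}. First I bound the bags: $\card{\GW_v(\Gamma)}\le\sw(\Gamma)$ by definition, and from the identity $\HW_v(\Gamma)=(\GW_v(\Gamma)\setminus\inA_N(v))\cup\outA_N(v)$ a short case distinction (tree vertices have in-degree~$1$; binary reticulations have in-degree~$2$ and out-degree~$1$) gives $\card{\HW_v(\Gamma)}\le\card{\GW_v(\Gamma)}+1\le\sw(\Gamma)+1$. Hence every bag $B$ occurring satisfies $\card{B}\le\sw(\Gamma)+1$, so each table holds at most $M\ceq(4(\sw(\Gamma)+1))^{\outDelta{T}(\sw(\Gamma)+1)}=2^{\outDelta{T}(\sw(\Gamma)+1)\log_2(4\sw(\Gamma)+4)}$ valid signatures; the leaf case, the copy when $q_1=q_2$, and \textsc{PopulateGWS} all cost $\bigO^*(M)$ per vertex.

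The one delicate point, which I expect to be the main obstacle, is the loop over $\GWS_{q_1}\times\GWS_{q_2}$ in the case $q_1\neq q_2$: a naive bound gives $\card{\GWS_{q_1}}\cdot\card{\GWS_{q_2}}\le M^2$, doubling the exponent. To recover a single factor $M$, I would use that the two child-cuts partition the bag below $v$: since arcs of $N$ point downward in $\Gamma$ and the subtrees rooted at $q_1$ and $q_2$ are $\Gamma$-incomparable, one obtains the disjoint union $\HW_v(\Gamma)=\GW_{q_1}(\Gamma)\uplus\GW_{q_2}(\Gamma)$, whence $\card{\GW_{q_1}(\Gamma)}+\card{\GW_{q_2}(\Gamma)}=\card{\HW_v(\Gamma)}\le\sw(\Gamma)+1$. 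Writing $b_i\ceq\card{\GW_{q_i}(\Gamma)}$ and $B\ceq\sw(\Gamma)+1\ge b_1+b_2$, \cref{lem:vsigbound} together with $4b_i\le 4B$ yields
\[
  \card{\GWS_{q_1}}\cdot\card{\GWS_{q_2}}
  \le (4b_1)^{\outDelta{T}b_1}(4b_2)^{\outDelta{T}b_2}
  \le (4B)^{\outDelta{T}(b_1+b_2)}
  \le (4B)^{\outDelta{T}B}=M\eqdot
\]
Thus the exponents merely add, the loop runs $\bigO^*(M)$ times, and summing the $\bigO^*(M)$ per-vertex cost over the $n^{\bigO(1)}$ vertices of $\Gamma$ gives the stated bound $\bigO^*(2^{\outDelta{T}(\sw(\Gamma)+1)\log_2(4\sw(\Gamma)+4)})$.
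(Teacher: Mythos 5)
Your proposal is correct and follows the same overall route as the paper: correctness by assembling \cref{lem:vsigleaf,lem:vsigHW,lem:vsigGW} into the invariant that $\GWS_v$ and $\HWS_v$ contain exactly the valid pairs, concluding with \cref{lem:vsigtoparc}; and the running time via $\card{\HW_v (\Gamma)} \leq \card{\GW_v (\Gamma)} + 1 \leq \sw(\Gamma) + 1$ combined with \cref{lem:vsigbound}. The one place where you genuinely diverge is the ``delicate point'' you flag, namely the loop over $\GWS_{q_1} \times \GWS_{q_2}$. The paper dispatches it in a single sentence by noting that each iteration adds a \emph{new} pair to $\HWS_v$ --- that is, the map $((S_1, \psi_1), (S_2, \psi_2)) \mapsto (S_1 \cup S_2, \psi_1 \cup \psi_2)$ is injective because the leaf sets of $\Gamma_{q_1}$ and $\Gamma_{q_2}$ are disjoint, so the number of iterations is $\card{\HWS_v} \leq M$ rather than $M^2$. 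Your alternative --- using $\HW_v (\Gamma) = \GW_{q_1} (\Gamma) \uplus \GW_{q_2} (\Gamma)$ so that the exponents from \cref{lem:vsigbound} add up to at most $\outDelta{T} \cdot (\sw(\Gamma) + 1)$ --- is equally valid and somewhat more explicit about why no squaring occurs; the paper's injectivity argument is slicker but leaves the reader to verify that distinct input pairs yield distinct unions. Either argument closes the gap, so nothing is missing from your proof.
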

\begin{proof}
Let $v \in V(\Gamma) \setminus \{\getroot(\Gamma)\}$ be a vertex.
Whenever \cref{alg:stc} adds a pair $(S, \psi)$ to $\GWS_v$ (or $\HWS_v$),
it is true that $[\GW_v (\Gamma), S, \psi]$ (or $[\HW_v (\Gamma), S, \psi]$) is a signature.
From \cref{lem:vsigleaf,lem:vsigHW,lem:vsigGW},
it follows that \cref{alg:stc} generates precisely every such signature that is valid.
Then, the correctness of \cref{alg:stc} is a consequence of \cref{lem:vsigtoparc}.
The bound on the running time may be deduced as follows:
Because $N$ is binary,
we have $\card{\HW_v (\Gamma)} \leq \card{\GW_v (\Gamma)} + 1 \leq \sw(\Gamma) + 1$.
Hence, by \cref{lem:vsigbound},
the sets $\GWS_v$ and $\HWS_v$ contain at most
$2^{\outDelta{T} \cdot (\sw(\Gamma) + 1) \cdot \log_2 (4 \cdot \sw(\Gamma) + 4)}$
elements each.
This also bounds the number of iterations of the loop in \cref{line:loop},
as each iteration adds a new pair to $\HWS_v$.
\end{proof}

Note that \cref{alg:stc} decides \STC{} without explicitly constructing a full soft pseudo-embedding of $T$ into $N$.
However, it is not difficult to adapt the algorithm so as to output such a soft pseudo-embedding on a $\YES$-instance:
Let $v \in (V(\Gamma) \setminus \{\getroot(\Gamma)\}) \setminus L(\Gamma)$.
Simply have the algorithm store, for each pair $(S, \psi) \in \GWS_v$,
a single pair from $\HWS_v$ from which $(S, \psi)$ was constructed.
Similarly, have the algorithm store, for each pair in $\HWS_v$,
the one or two pairs from which it was constructed.
In case of a $\YES$-instance,
these separately stored pairs would allow the algorithm, at the end,
to construct a soft pseudo-embedding of $T$ into $N$ through backtracing.

\subsection{Soft Display by Arbitrary Networks}\label{sec:arbnet}

In this section,
we devise a reduction that maps any network to a binary one
while preserving the trees that the network softly displays.
The reduction can be performed in polynomial time,
and the resulting increase in the scanwidth of the network can be bounded.
Relying on the considerations of the previous section,
we then obtain our main result (\cref{thm:stc}).
The reduction consists of two steps:
\begin{enumerate}
\item We \emph{stretch} the network.
This means that we modify the network
using gadgets that are conceptually very similar to the \emph{universal networks} by Zhang~\cite{Zha16}
(see \cref{fig:str}, also compare \cite[Figure~4]{Zha16}).
Like these universal networks,
our gadgets consist of a triangular upper part and a lower part that is based on a sorting network.

The triangular upper part is essentially one half of a grid that was slightly modified to obtain a phylogenetic network.
In it, any binary phylogenetic tree (up to some number of leaves) can be embedded.
As each high-out-degree vertex is replaced by a gadget when stretching,
the stretched network represents all possible binary out-resolutions.

However, embedding a binary tree in the triangular upper part
artificially enforces an order on the leaves of the binary tree.
The lower part allows us to undo this.
We choose a different sorting network than Zhang \cite{Zha16}.

\item We compute an arbitrary binary in-resolution of the stretched network through exhaustive in-splitting.
This does not affect which trees are firmly displayed.
\end{enumerate}

The purpose of stretching, defined below, is to get rid of vertices with high out-degree in a network,
so that we can assume the resulting network to have maximum out-degree $2$.

\begin{definition}\label{def:str}
Let $N$ be any arbitrary network,
and let $v \in V(N)$ be a vertex of out-degree $d \ceq \outdeg_N (v)$ at least $3$.
When \emph{stretching} $v$, we modify $N$ as follows:
First, we remove the arcs between $v$ and its children $\outV_N (v) = \{c_1, c_2, \dots, c_d\}$.
Then, we insert a gadget consisting of two parts.

For the first part, we start by inserting a triangular half of a grid rooted at $v$.
The construction proceeds in layers:
The new children of $v$ are two new vertices $u_{2, 1}$ and $u_{2, 2}$ now.
Below this layer, there are $d - 2$ additional layers.
Each layer has an index $i \in \{2, 3, \dots, d\}$ and consists of the vertices $\{u_{i, j} \mid j \in [i]\}$.
For $i < d$, every vertex $u_{i, j}$ has the children $u_{i + 1, j}$ and $u_{i + 1, j + 1}$.
This triangular half-grid can contain some vertices $u_{i, j}$ that simultaneously have in-degree $2$ and out-degree $2$.
We fix this by making the two children of such a vertex $u_{i, j}$ the children of a new vertex $u'_{i, j}$ instead,
and we add an arc from $u_{i, j}$ to $u'_{i, j}$.
Further, we suppress the two vertices $u_{d, 1}$ and $u_{d, d}$ later,
since otherwise they would have in-degree $1$ and out-degree $1$ simultaneously in the final gadget.

For now, we start constructing the second part of the gadget by adding $d$ arcs,
each one from $u_{d, j}$ to $c_j$ for $j \in [d]$.
We subdivide these arcs repeatedly to obtain $d$ paths,
and we add arcs between the paths to obtain a sorting network.
Specifically, we construct $(d - 1)^2$ so called \emph{crossovers} \cite{Hay16}:
A crossover has indices $i, j \in [d - 1]$ and involves four vertices,
denoted $w_{i, j, k}$ for $k \in [4]$.
These vertices all lie on the $d$ paths that we created through repeated subdivision.
Each crossover has four arcs:
from $w_{i, j, 1}$ to $w_{i, j, 3}$ and to $w_{i, j, 4}$,
as well as from $w_{i, j, 2}$ to $w_{i, j, 3}$ and to $w_{i, j, 4}$.
We place these crossovers so that they resemble a bubble-sort network with $d - 1$ passes (indexed by $i$)
where, per pass, $d - 1$ comparisons (indexed by $j$) are performed.
Each crossover represents one such comparison.

When \emph{stretching} $N$ we stretch every vertex of $N$ that has an out-degree of at least~$3$,
and we denote the resulting network $\str(N)$.
\end{definition}

\noindent
In the appendix,
we provide an equivalent definition where the vertices and the arcs of a stretch gadget are listed explicitly
(\cref{def:strexp}).

\def\rotfig{70}
\def\drawswitch#1#2#3{
  \node (w1#2#3) at (#1) {$w_{#2, #3, 1}$};
  \node (w2#2#3) at ($(w1#2#3)+(0+\rotfig:{sqrt(2)})$) {$w_{#2, #3, 2}$};
  \node (w4#2#3) at ($(w2#2#3)+(-90+\rotfig:2)$) {$w_{#2, #3, 4}$};
  \node (w3#2#3) at ($(w1#2#3)+(-90+\rotfig:2)$) {$w_{#2, #3, 3}$};
  \foreach \u in {1,2}{
    \foreach \v in {3,4}{
      \draw (w\u#2#3) -- (w\v#2#3);
  }}
}
\begin{figure}[t]
\centering
\scalebox{.84}{\begin{tikzpicture}[xscale=.7, yscale=.6, every path/.style={-latex}]
  \node (v) at (0,0) {$v$};
  \node (u21) at ($(v)+(-135+\rotfig:2)$) {$u_{2, 1}$};
  \node (u22) at ($(v)+(-45+\rotfig:2)$) {$u_{2, 2}$};
  \node (u32) at ($(u21)+(-45+\rotfig:2)$) {$u_{3, 2}$};
  \drawswitch{$(u21)+(-90+\rotfig:{2+sqrt(2)})$}{1}{1}
  \drawswitch{$(w411)+(-90+\rotfig:2)$}{1}{2}
  \drawswitch{$(w311)+(-90+\rotfig:6)$}{2}{1}
  \drawswitch{$(w421)+(-90+\rotfig:2)$}{2}{2}
  \node (c1) at ($(w321)+(-90+\rotfig:6)$) {$c_1$};
  \node (c2) at ($(c1)+(0+\rotfig:{sqrt(2)})$) {$c_2$};
  \node (c3) at ($(c2)+(0+\rotfig:{sqrt(2)})$) {$c_3$};
  \foreach \u/\v in {v/u21, v/u22, u21/w111, u21/u32, u22/u32, u22/w212, u32/w211, w321/c1, w322/c2, w422/c3} \draw (\u) -- (\v);
  \foreach \u/\v in {w311/w121, w411/w112, w312/w221, w421/w122, w412/w222} \draw (\u) -- (\v);
\end{tikzpicture}}
\caption{Stretching a vertex~$v$ with three children $c_1$, $c_2$, and $c_3$ gives this stretch gadget.}\label{fig:str}
\end{figure}

An anonymous reviewer suggested a simple alternative to stretching that,
in a network $N$,
below any vertex $v$ of out-degree $d \geq 1$ with children $c_1, c_2, \dots, c_d$,
inserts a recursively defined gadget $G_d$ instead
(again replacing the outgoing arcs of $v$):
If $d = 1$, then $G_d$ simply consists of $v$ and its child $c_1$,
with the arc in between exactly as it exists in $N$.
If $d \geq 2$, then $G_d$ is constructed by first inserting the gadget $G_{d - 1}$
between $v$ and children $c_1, c_2, \dots, c_{d - 1}$,
and then subdividing each arc $a$ of $G_{d - 1}$ with a new vertex $u_a$.
Finally, arcs from every $u_a$ to $c_d$ are added.
Note that, in the end, $v$ has a single outgoing arc.
To obtain a phylogenetic network, this arc needs to be contracted if $d \geq 2$,
but only once,
after the recursion is fully done.
This gadget mimics the property that removing a leaf from a tree with $k \geq 3$ leaves
produces what is essentially a subdivision of a tree with $k - 1$ leaves
(but its root may now have out-degree $1$).

These alternative gadgets can be described simply and concisely;
however, they can cause an increase in scanwidth that is exponential in the initial maximum out-degree of the network.
This is because, to add one more child, the number of arcs of the gadget is doubled through subdivision.
The next child to add then has high in-degree, which causes the gadget to have high scanwidth.
To obtain the running time reported in \cref{thm:stc},
we use stretch gadgets,
since these can only cause an increase in scanwidth that is linear in the initial maximum out-degree of the network
(\cref{lem:swstr}).

The proof of \cref{thm:stc} relies on four key claims:
\cref{lem:strsd,lem:binressd}, which state that stretching and in-splitting
preserve the collection of trees softly displayed a network,
and \cref{lem:swstr,lem:swin}, which state that the increase in scanwidth when stretching and in-splitting is bounded,
using the network's maximum out-degree.
To prove \cref{lem:strsd}, we first formalize the intuition that the stretched network
``represents'' all possible binary out-resolutions.

\begin{lemma}\label{lem:strfdiffoutfd}
  Let $N$ be a network and let $T$ be a tree.
  Then,
  $\str(N)$ firmly displays $T$
  if and only if
  there is a binary out-resolution of $N$ that firmly displays $T$.
\end{lemma}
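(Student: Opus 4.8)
The plan is to prove both directions of the equivalence by exhibiting the correspondence between firm embeddings into the stretch gadget and binary out-resolutions. The key structural fact I would establish first is that the stretch gadget replacing a vertex $v$ of out-degree $d$ has a \emph{universal} upper (triangular) part: any binary out-tree on $d$ leaves, with any ordering of those leaves, can be embedded as a subdivision inside it. This is exactly the property inherited from Zhang's universal networks. The lower (sorting-network) part then allows the artificially-imposed leaf order at the bottom of the triangle to be permuted arbitrarily, so that the actual children $c_1,\dots,c_d$ can appear in any order. Combining these two observations, a firm embedding of $T$ that routes through the gadget witnesses a particular binary out-resolution of $v$, and conversely.

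\medskip
\noindent
\textbf{Direction ``$\Leftarrow$''.}
First I would suppose some binary out-resolution $N_b$ of $N$ firmly displays $T$, witnessed by a leaf-respecting isomorphism~$\iota$ between a subdivision $T^\dagger$ of $T$ and a leaf-monotone subdigraph of $N_b$. I would proceed gadget by gadget: for each stretched vertex $v$, the out-splitting that produced $N_b$ resolved $v$ into some binary out-tree $B_v$ on a subset of the children $c_1,\dots,c_d$, and $\iota$ maps part of $T^\dagger$ onto (a subdivision of) $B_v$. Using the universality of the triangular part together with the reordering power of the sorting network, I would embed this same binary shape $B_v$ into the stretch gadget of $v$ inside $\str(N)$, respecting whatever leaf order the children require. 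Doing this simultaneously for every stretched vertex and leaving the unstretched portion of $N$ untouched yields a leaf-respecting isomorphism witnessing that $\str(N)$ firmly displays $T$.

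\medskip
\noindent
\textbf{Direction ``$\Rightarrow$''.}
Conversely, assume $\str(N)$ firmly displays $T$, witnessed by an isomorphism $\iota$ from a subdivision $T^\dagger$ of $T$ onto a leaf-monotone subdigraph of $\str(N)$. For each stretched vertex $v$, I would inspect how the image of $T^\dagger$ passes through $v$'s gadget. Since the gadget's only exits to the rest of the network are the arcs reaching $c_1,\dots,c_d$, the subtree of $T^\dagger$ entering the gadget must route (as eventually disjoint paths through tree-vertices) to some subset of these children, and the branching structure it induces is a binary out-tree on those children. Reading off this branching pattern at every gadget specifies precisely a binary out-resolution $N_b$ of $N$, and contracting each gadget back down to the corresponding resolved branching converts $\iota$ into a witness that $N_b$ firmly displays $T$.

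\medskip
\noindent
\textbf{Main obstacle.} I expect the crux to be verifying the universality claim rigorously: namely that \emph{every} binary out-tree on up to $d$ leaves embeds in the triangular part, and that the sorting-network lower part genuinely realizes every permutation of the $d$ output positions, so that no ordering constraint is secretly imposed. This amounts to an induction on the grid-like structure of the gadget (tracking which $u_{i,j}$ and $w_{i,j,k}$ vertices carry which embedded paths) together with the standard completeness of the chosen sorting network. The bookkeeping of subdivisions — ensuring that suppressing the degree-$2$ vertices introduced inside the gadget recovers exactly the resolved branching and nothing more — is the routine but delicate part, and is precisely what \Cref{def:strfull} is engineered to make work out cleanly.
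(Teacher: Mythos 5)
Your proposal is correct and follows essentially the same route as the paper: the ``$\Leftarrow$'' direction rests on the universality of the stretch gadget (the paper phrases this as ``by construction of $\str(N)$, it has a leaf-monotone subdigraph isomorphic to a subdivision of the embedded part of the out-resolution''), and the ``$\Rightarrow$'' direction reads a binary out-resolution off the embedding by suppressing the degree-$2$ gadget vertices and contracting the remaining gadget bifurcations back onto vertices of $N$, exactly as you describe. The universality/sorting-network verification you flag as the crux is likewise left at the level of an assertion in the paper, so your sketch matches it in both structure and level of detail.
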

\begin{proof}
\proofright{}
Suppose $\str(N)$ firmly displays~$T$, that is,
there is some leaf-monotone subdigraph~$N'$ of $\str(N)$
that is isomorphic (respecting leaves)
to some subdivision~$T'$ of $T$,
and let~$\iota \colon V(T') \to V(N')$ denote this isomorphism.
Without loss of generality, we can assume that $\getroot(N') \in V(N)$
(so the root of $N'$ is not strictly inside a stretch gadget).
We note that $T$ is necessarily binary.
We then aim to establish the existence of a binary out-resolution of $N$ that firmly displays $T$.
To this end, we want to extract the structure of paths corresponding to $T$ in stretch gadgets.
This structure then allows us to construct a suitable binary out-resolution of $N$.
Informally, we first suppress all suppressible vertices of $N'$ that are strictly inside the stretch gadgets of $\str(N)$.
Now, there might be some vertices~$u$ of $N$ with out-degree $1$ left that have bifurcating vertices in their stretch gadgets,
and so we contract the top bifurcation onto~$u$.
More formally,
\begin{enumerate}
  \item we construct the set $S_1 \ceq \{v \in V(N') \setminus V(N) \mid \indeg_{N'} (v) = \outdeg_{N'} (v) = 1\}$,
  \item we obtain $N''$ from $N'$ by suppressing all vertices that are in $S_1$,
  \item we obtain $T''$ from $T'$ by suppressing all vertices that are in $\iota^{-1} (S_1)$,
  \item we construct the leaf-respecting isomorphism $\kappa \ceq \iota |_{V(T'')}$ between $T''$ and $N''$,
  \item we construct $S_2 \ceq \{u \in V(N'') \cap V(N) \mid \outdeg_{N''} (u) = 1 \land \outV_{N''} (u) \nsubseteq \outV_N (u)\}$,
  \item we obtain $N'''$ from $N''$ by, for each $u \in S_2$, contracting the child of $u$ into $u$, and
  \item we obtain $T'''$ from $T''$ by suppressing all vertices that are in $\kappa^{-1} (S_2)$.
\end{enumerate}
We argue that $T'''$ is a subdivision of $T$:
This is because $T'$ is a subdivision of $T$ by assumption
and we obtained $T'''$ from $T'$ by suppressing in-degree-$1$ out-degree-$1$ vertices,
which are not present in $T$ by definition of a phylogenetic tree.
We also argue that there is a leaf-respecting isomorphism between $T'''$ and $N'''$:
By assumption, $\iota$ is a leaf-respecting isomorphism between $T'$ and $N'$;
from these two, we constructed $T'''$ and $N'''$ while maintaining isomorphism (respecting leaves) throughout.
To finish this direction of the proof,
we show that there is a subdivision of $N'''$
that simultaneously is a leaf-monotone subdigraph of some binary out-resolution of $N$.
We can then conclude that such a binary out-resolution of $N$ firmly displays $T$.
To this end, we argue that any vertex $v \in V(N''')$, together with all its outgoing arcs in $N'''$,
also exists in a suitably chosen binary out-resolution of $N$,
possibly after performing subdivisions in $N'''$:
Note that $N'''$ is binary, and so $v$ has out-degree $0$, $1$, or $2$.
If $\outdeg_{N'''} (v) = 0$, then $v$ is a leaf in any binary out-resolution of $N$.
If $\outdeg_{N'''} (v) = 1$, then $v$ and its child $c \in \outV_{N'''} (v)$ both exist in $N$
and are connected by the arc $(v, c)$ in $N$ too,
and this arc also exists in a suitably chosen binary out-resolution of $N$.
If $\outdeg_{N'''} (v) = 2$, then we proceed as follows:
Either $v \in V(N)$ holds or $v$ originates from a stretch gadget.
We assume that $v \in V(N)$ holds
and handle all vertices originating from a possible stretch gadget involving $v$ together with $v$.
Directly below $v$ in $N'''$,
there is a binary tree (generally using vertices of a stretch gadget) that leads to a subset of $\outV_N (v)$.
Any binary tree leading to exactly $\outV_N (v)$ can be recreated in a binary out-resolution of $N$
through out-splitting below $v$.
To ensure that $N'''$ is a subdigraph of such a binary out-resolution,
we can subdivide an arc of the binary tree below $v$ in $N'''$
to account for vertices of $\outV_N (v)$ that do not appear in $N'''$
but do exist in any binary out-resolution of $N$ regardless.
In the binary out-resolution of choice,
these extra vertices can be placed in a binary tree that dangles off the vertex added through subdivision below $v$.

\proofleft{}
Suppose $T$ is firmly displayed by a binary out-resolution $N'$ of $N$.
To show that $\str(N)$ firmly displays $T$,
we argue that there is a subdivision of $N'$
that is isomorphic (respecting leaves) to a leaf-monotone subdigraph of $\str(N)$.
Concretely, for any vertex $v \in V(N')$,
there is an equivalent vertex in $\str(N)$,
possibly after subdividing arcs in $N'$:
Since $N'$ is a binary out-resolution,
$v$ has an out-degree of at most $2$ in $N'$.
If $\outdeg_{N'} (v) = 0$, then $v$ is a leaf of $N$ and, thus, a leaf of $\str(N)$.
If $\outdeg_{N'} (v) = 1$, then we know that $v$ is a reticulation and also present in $N$ and $\str(N)$,
always with the same neighborhood as in $N'$.
If $\outdeg_{N'} (v) = 2$, then we proceed as follows:
Note that either $v \in V(N)$ holds or $v$ was introduced in $N'$ through out-splitting.
We assume that $v \in V(N)$ holds
and handle all vertices introduced through out-splitting below $v$ in $N'$ together with $v$.
Rooted at $v$, these vertices form a binary tree whose leaves are $\outV_N (v)$
(although these are not necessarily leaves of $N'$ itself and $v$ is not necessarily the root of $N'$).
To finish the proof, we show that, for the vertices of this binary tree rooted at $v$ in $N'$,
there are equivalent vertices in $\str(N)$,
possibly after subdividing arcs of the binary tree in $N'$.
If $\outdeg_N (v) = 2$, then the binary tree has exactly two leaves and exists identically in $\str(N)$.
If $\outdeg_N (v) \geq 3$, then $v$ was stretched to obtain $\str(N)$,
and it remains to argue that a subdivision of the binary tree rooted at $v$ in $N'$
is isomorphic (respecting the leaves $\outV_N (v)$)
to a leaf-monotone subdigraph of the stretch gadget rooted at $v$ in $\str(N)$.
We use $B \subseteq N'$ to denote the binary tree rooted at $v$,
and we use $S \subseteq \str(N)$ to denote the stretch gadget rooted at $v$.
Note that $\getroot(B) = \getroot(S) = v$ and that $L(B) = L(S) = \outV_N (v)$.
What is left to show is equivalent to the claim that $S$ firmly displays $B$.
This follows from an argument that is analogous to that of Zhang \cite[Theorem 2]{Zha16}.
It is not essential that the stretch gadgets use different sorting networks than the universal networks by Zhang do.
\end{proof}

With \cref{lem:strfdiffoutfd} at hand,
we now show that stretching preserves the collection of trees that a network softly displays.

\begin{restatable}{lemma}{lemstrsd}\label{lem:strsd}
Let $N$ be any network and
let $T$ be any tree.
Then,
$N$ softly displays $T$
if and only if
$\str(N)$ softly displays $T$.
\end{restatable}
\begin{proof}\label{pr:lemstrsd}
\proofright{}
First, assume that $N$ softly displays $T$.
There is a binary out-resolution $N''$ of a binary in-resolution $N'$ of $N$
as well as a binary resolution $T'$ of $T$
such that $N''$ firmly displays $T'$.
From \cref{lem:strfdiffoutfd}, it follows that $\str(N')$ firmly displays $T'$.
Since $\str(N')$ is a binary resolution of $\str(N)$,
we conclude that $\str(N)$ softly displays $T$.

\proofleft{}
Second, suppose that $\str(N)$ softly displays $T$.
Then, there is a binary resolution $T'$ of $T$
that is firmly displayed by some binary resolution of $\str(N)$.
Any such binary resolution of $\str(N)$
may be assumed to have the form $\str(N')$
for some binary in-resolution $N'$ of $N$.
To summarize, $\str(N')$ firmly displays $T'$.
From \cref{lem:strfdiffoutfd}, it follows that
there is a binary out-resolution $N''$ of $N'$ that firmly displays $T'$.
Because $N''$ is a binary resolution of $N$,
this implies that $N$ softly displays $T$.
\end{proof}

We argue that a stretched network can be made binary through in-splitting
without altering the collection of softly displayed trees;
the collection of firmly displayed trees also stays unaltered.
In stretched networks, the maximum out-degree is $2$;
that is, we can assume that $\outDelta{N} = 2$.
The in-degrees, however, are generally unbounded.
The lemma below justifies arbitrary in-splitting (taking any binary (in-)resolution)
to make a stretched $N$ binary indeed.

\begin{restatable}{lemma}{lembinressd}\label{lem:binressd}
Let $N$ be any network such that $\outDelta{N} = 2$,
let $N'$ be any binary (in-)resolution of $N$,
and let $T$ be any tree.
Then,
\begin{enumerate}[(1)]
\item $N$ firmly displays $T$ if and only if $N'$ firmly displays $T$, and
\item $N$ softly displays $T$ if and only if $N'$ softly displays $T$.
\end{enumerate}
\end{restatable}
\begin{proof}\label{pr:lembinressd}
The ``firmly''-part (1) is a slight adaptation of a proposition by Gunawan~\cite[Proposition~3.9]{Gun18}.
As the proof is analogous, we omit it here and show only the ``softly''-part (2).

\proofright{}
Suppose that $N$ softly displays $T$.
Then, there are binary resolutions $N''$ of $N$ and $T'$ of $T$
such that $N''$ firmly displays $T'$.
Applying the ``firmly''-part (1) from right to left,
from $N''$ firmly displaying $T'$, it follows that $N$ firmly displays $T'$.
By applying the ``firmly''-part (1) from left to right,
from $N$ firmly displaying $T'$, we infer that $N'$ firmly displays $T'$.
Since $N'$ is its own binary resolution,
we can conclude that $N'$ softly displays $T$.

\proofleft{}
Suppose that $N'$ softly displays $T$.
Then, $N'$ firmly displays some binary resolution of $T$.
As $N'$ is a binary resolution of $N$,
we deduce that $N$ softly displays $T$.
\end{proof}

Finally, we show that neither stretching nor in-splitting to obtain a binary in-resolution
can blow up the scanwidth too much, in order to prove \cref{thm:stc}.

\begin{restatable}{lemma}{lemswstr}\label{lem:swstr}
  Let $N$ be any network.
  Then, $\sw(\str(N)) \leq \sw(N) + 2 \cdot \outDelta{N}$.
\end{restatable}
\begin{proof}\label{pr:lemswstr}
Let $\Gamma$ be any tree extension of $N$.
We prove the lemma in two steps.

First, let $v \in V(N)$ be a vertex with $d \ceq \outdeg_N (v) \geq 3$,
and let $N'$ be obtained from $N$ by stretching $v$,
thereby adding several new vertices (see \cref{def:strexp} in the appendix):
\begin{equation*}
\begin{split}
V(N') = V(N) &\uplus (\{u_{i, j} \mid i \in [2, d - 1] \land j \in [1, i]\} \\
&\uplus (\{u'_{i, j} \mid i \in [3, d - 1] \land j \in [2, i - 1]\} \\
&\uplus (\{u_{d, j} \mid j \in [2, d - 1]\} \\
&\uplus \{w_{i, j, k} \mid i, j \in [1, d - 1] \land k \in [1, 4]\}))) \eqdot
\end{split}
\end{equation*}
We construct a tree extension $\Gamma'$ of $N'$
by inserting the new vertices, $V(N') \setminus V(N)$,
in the form of a path right below $v$ in $\Gamma$,
above the children of $v$ in $\Gamma$.
The new vertices are inserted one after the other according to the following procedure:
\begin{enumerate}
\item For $i \gets 2$ to $d - 1$:
\begin{enumerate}
\item For $j \gets 1$ to $i$, insert $u_{i, j}$.
\item For $j \gets 2$ up to $i - 1$, insert $u'_{i, j}$.
\end{enumerate}
\item For $j \gets 2$ to $d - 1$, insert $u_{d, j}$.
\item For $i \gets 1$ to $d - 1$, for $j \gets 1$ to $d - 1$, for $k \gets 1$ to $4$, insert $w_{i, j, k}$.
\end{enumerate}
It is easy to verify that the resulting out-tree $\Gamma'$ is a tree extension of $N'$.

Second, we construct a tree extension $\Gamma''$ of $\str(N)$ as follows:
Starting from $\Gamma$,
we perform the above procedure for each vertex of $N$ that was stretched to obtain $\str(N)$,
inserting a new path each time.
We observe that $\sw(\Gamma'') \leq \sw(\Gamma) + 2 \cdot \outDelta{N}$ and,
since $\Gamma$ was chosen arbitrarily, $\sw(\str(N)) \leq \sw(N) + 2 \cdot \outDelta{N}$.
\end{proof}

The last ingredient for \cref{thm:stc} is the monotonicity of scanwidth under in-splitting.

\begin{restatable}{lemma}{lemswin}\label{lem:swin}
  Let $N'$ arise from any network $N$ through in-splitting.
  Then, $\sw(N') \leq \sw(N)$.
\end{restatable}
\begin{proof}\label{pr:lemswin}
Let $N$ be any arbitrary network,
let $v \in V(N)$ have in-degree at least $3$ in $N$,
let $u, w \in \inV_N (v)$ be two distinct parents of $v$ in $N$,
and let $N'$ be the result of in-splitting $u$ and $w$ from $v$ in $N$,
thereby adding a new vertex $x \in V(N') \setminus V(N)$.
Also, let $\Gamma$ be any tree extension of $N$.
Clearly, $v$ cannot be the root of $\Gamma$;
let $s \in \inV_\Gamma (v)$ be the parent of $v$ in $\Gamma$.
We construct $\Gamma'$ by subdividing the arc $(s, v) \in A(\Gamma)$ with $x$ in $\Gamma$.
It is easy to see that $\Gamma'$ is a tree extension of $N'$.
To conclude that $\sw(N') \leq \sw(N)$, as $\sw_{N'} (\Gamma') \leq \sw_N (\Gamma)$,
we make the following observations:
Let $t \in V(\Gamma') \setminus \{x, v\} \subseteq V(\Gamma)$ be a vertex.
\begin{enumerate}
\item We have
$\GW_t (\Gamma') \setminus \{(u, x), (w, x)\} \subseteq \GW_t (\Gamma) \setminus \{(u, v), (w, v)\}$.
\item If $(u, x) \in \GW_t (\Gamma')$, then $(u, v) \in \GW_t (\Gamma)$.
\item If $(w, x) \in \GW_t (\Gamma')$, then $(w, v) \in \GW_t (\Gamma)$.
\item We have
$\GW_x (\Gamma') \uplus \{(u, v), (w, v)\} \subseteq \GW_v (\Gamma) \uplus \{(u, x), (w, x)\}$.
\item We have
$\GW_v (\Gamma') \uplus \{(u, v), (w, v)\} \subseteq \GW_v (\Gamma) \uplus \{(x, v)\}$.
\end{enumerate}
Therefore, $\card{\GW_t (\Gamma')} \leq \card{\GW_t (\Gamma)}$,
$\card{\GW_x (\Gamma')} \leq \card{\GW_v (\Gamma)}$,
and $\card{\GW_v (\Gamma')} \leq \card{\GW_v (\Gamma)}$.
\end{proof}

We finish with the main result of this paper.

\begin{theorem}\label{thm:stc}
Given a network $N$, a tree $T$, and a tree extension $\Gamma$ of $N$,
we can decide whether $N$ softly displays $T$ in
$\bigO^* (2^{\bigO(\outDelta{T} \cdot k \cdot \log(k))})$ time,
where $k \ceq \sw_N (\Gamma) + \outDelta{N}$.
\end{theorem}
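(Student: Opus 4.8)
The plan is to reduce the general problem to the binary-network special case solved by \cref{prop:stc} and then to track how the parameters evolve through the reduction. First I would apply the reduction of \cref{sec:arbnet}: given the input network $N$, compute the stretched network $\str(N)$ and then an arbitrary binary in-resolution $\hat N$ of $\str(N)$ via exhaustive in-splitting. By \cref{lem:strsd}, $N$ softly displays $T$ if and only if $\str(N)$ does, and since $\outDelta{\str(N)} = 2$ (every vertex of out-degree $\geq 3$ has been replaced by a binary gadget), \cref{lem:binressd} makes this in turn equivalent to $\hat N$ softly displaying $T$; note that $\hat N$ is binary and $L(\hat N) = L(N)$, as neither stretching nor in-splitting touches the leaf set. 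Both steps run in polynomial time, and since each stretch gadget has size polynomial in the out-degree of the stretched vertex, $\hat N$ has size $n^{\bigO(1)}$, so feeding $\hat N$ into \cref{alg:stc} only incurs a polynomial overhead absorbed by the $\bigO^*$ notation.

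Second, I would produce a canonical tree extension of $\hat N$ of small scanwidth. Chaining \cref{lem:swin} (applied once per in-split) and \cref{lem:swstr} yields
\[
  \sw(\hat N) \leq \sw(\str(N)) \leq \sw(N) + 2\cdot\outDelta{N} \leq \sw_N(\Gamma) + 2\cdot\outDelta{N} \leq 2k\eqcom
\]
where the third inequality uses $\sw(N) \leq \sw_N(\Gamma)$ and the last uses $\outDelta{N} \leq k$. To run the algorithm, however, I need a concrete \emph{canonical} tree extension realizing this bound, not merely the existence of a low-scanwidth one. Here I would use that the constructions underlying \cref{lem:swstr,lem:swin} in fact transform the given $\Gamma$ into a tree extension $\hat\Gamma$ of $\hat N$ with $\sw_{\hat N}(\hat\Gamma) \leq \sw_N(\Gamma) + 2\cdot\outDelta{N} \leq 2k$ in polynomial time, and then apply the preliminaries' polynomial-time procedure turning $\hat\Gamma$ into a canonical tree extension of no larger scanwidth.

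Third, I would invoke \cref{prop:stc} on $\hat N$, $T$, and the canonical tree extension just obtained (after the cosmetic root attachments of \cref{as:NTGamma}, which leave $\outDelta{T}$ unchanged and which require $L(T) = L(\hat N)$; the case $L(T) \subsetneq L(N)$ is reduced to this by first pruning the leaves of $N$ outside $L(T)$ and suppressing any resulting in-and-out-degree-$1$ vertices, a routine cleanup that preserves soft display of $T$). \cref{prop:stc} decides whether $\hat N$ softly displays $T$ in $\bigO^*(2^{\outDelta{T}\cdot(s+1)\cdot\log_2(4s+4)})$ time, where $s \leq 2k$ is the scanwidth of the canonical tree extension. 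Substituting $s = \bigO(k)$ makes the exponent $\outDelta{T}\cdot\bigO(k)\cdot\bigO(\log k) = \bigO(\outDelta{T}\cdot k\cdot\log k)$, which together with the polynomial overhead of the preprocessing yields the claimed running time.

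The step I expect to be the main obstacle is the second one: the bounds in \cref{lem:swstr,lem:swin} are stated for the scanwidth $\sw(\cdot)$, i.e., a minimum over all tree extensions, whereas \cref{alg:stc} must be handed one explicit canonical tree extension. Bridging this gap requires the arguments proving those two lemmas to be genuinely constructive---taking the given $\Gamma$ and explicitly reshaping it to accommodate the stretch gadgets and the in-splits---rather than merely existential, and one must check that this reshaping runs in polynomial time and that the result can be canonicalized without increasing its scanwidth. Everything else is routine bookkeeping of polynomial blow-ups and substitution into the running-time bound of \cref{prop:stc}.
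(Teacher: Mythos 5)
Your proposal is correct and follows essentially the same route as the paper's own proof: reduce to a binary network via stretching and in-splitting (\cref{lem:strsd,lem:binressd}), explicitly transform the given $\Gamma$ into a tree extension of the binary network using the constructions in the proofs of \cref{lem:swstr,lem:swin}, canonicalize, adjust for \cref{as:NTGamma} (including leaf pruning), and invoke \cref{prop:stc}. The concern you flag about the constructiveness of \cref{lem:swstr,lem:swin} is resolved exactly as you anticipate—the paper likewise points to the explicit $\Gamma$-reshaping in those proofs rather than to the existential statements.
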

\begin{proof}
Let $N'$ be a binary (in-)resolution of $\str(N)$.
By \cref{lem:strsd,lem:binressd},
we know that $N$ softly displays $T$
if and only if $N'$ softly displays $T$.
Based on $\Gamma$,
we construct a tree extension $\Gamma'$ of $N'$
such that $\sw_{N'} (\Gamma')$ is bounded from above
by some linear function of $\sw_N (\Gamma) + \outDelta{N}$,
as described in the proofs of \cref{lem:swstr,lem:swin}.
It is easy to modify $N'$ and $\Gamma'$
so as to fulfill the requirements of \cref{as:NTGamma},
since we reject if $L(T) \nsubseteq L(N)$.
This involves removing excess leaves from $N'$,
updating $\Gamma'$ accordingly,
and making it canonical.
Finally, we invoke \cref{alg:stc},
from which we derive the promised running time;
see \cref{prop:stc}.
\end{proof}

\section{Conclusion}

We devised a parameterized algorithm for the \STC{} problem,
where it has to be decided whether a phylogenetic network softly displays a phylogenetic tree.
This algorithm consists of two steps:
solving the problem via dynamic programming when the network is binary
and reducing any non-binary network to an equivalent binary one.
The running time of this algorithm depends exponentially on a function of
the network's and the tree's maximum out-degree
and of the scanwidth attained by a given tree extension of the network.
The dependence on the input size is polynomial.

Future work can attempt to eliminate the superpolynomial dependence on the maximum out-degrees
or, alternatively, rule out the existence of such an algorithm.
Because of the dependence on scanwidth,
it is essential to be able to obtain low-scanwidth tree extensions in a reasonable amount of time.
This further motivates the search for exact or approximate algorithms to construct such tree extensions.
Additionally, it is of interest to tackle \STC{} with parameters stronger than scanwidth,
such as node-scanwidth~\cite{Hol23}, edge-treewidth~\cite{MPS+23}, or treewidth.
Even weaker parameters could be of interest,
when they give rise to algorithms that perform better in practice for example.
Finally, it is worth exploring whether our techniques can be adapted to approach other containment problems,
for instance, \NC{} in the sense of Janssen and Murakami \cite{JM21}.

\bibliographystyle{abbrvnat}
\bibliography{mybibliography}

\appendix

\section{Stretching}

In the following, we list the vertices and arcs of a stretch gadget explicitly (compare with \cref{def:str}).
We hope that this can aid in possible implementation projects.

\begin{definition}\label{def:strexp}
Let $N$ be any arbitrary network,
and let $v \in V(N)$ be a vertex of out-degree $d \ceq \outdeg_N (v)$ at least $3$.
Then, \emph{stretching} $v$ produces the network $N' \ceq (V, A)$
by adding several new vertices, that is,
\begin{equation*}
\begin{split}
V \ceq V(N) &\uplus (\{u_{i, j} \mid i \in [2, d - 1] \land j \in [1, i]\} \\
&\uplus (\{u'_{i, j} \mid i \in [3, d - 1] \land j \in [2, i - 1]\} \\
&\uplus (\{u_{d, j} \mid j \in [2, d - 1]\} \\
&\uplus \{w_{i, j, k} \mid i, j \in [1, d - 1] \land k \in [1, 4]\}))) \eqcom
\end{split}
\end{equation*}
while using the arcs
\begingroup
\allowdisplaybreaks
\begin{align*}
A \ceq{}& (A(N) \setminus \outA_N (v)) \\
&\cup \{(v, u_{2, 1}), (v, u_{2, 2})\} \\
&\cup \{(u_{i, 1}, u_{i + 1, 1}), (u_{i, 1}, u_{i + 1, 2}) \mid i \in [2, d - 2]\} \\
&\cup \{(u_{i, i}, u_{i + 1, i}), (u_{i, i}, u_{i + 1, i + 1}) \mid i \in [2, d - 2]\} \\
&\cup \{(u_{i, j}, u'_{i, j}) \mid i \in [3, d - 1] \land j \in [2, i - 1]\} \\
&\cup \{(u'_{i, j}, u_{i + 1, j}), (u'_{i, j}, u_{i + 1, j + 1}) \mid i \in [3, d - 1] \land j \in [2, i - 1]\} \\
&\cup \{(u_{d - 1, 1}, w_{1, 1, 1}), (u_{d - 1, 1}, u_{d, 2})\} \\
&\cup \{(u_{d - 1, d - 1}, u_{d, d - 1}), (u_{d - 1, d - 1}, w_{1, d - 1, 2})\} \\
&\cup \{(u_{d, j}, w_{1, j - 1, 2}) \mid j \in [2, d - 1]\} \\
&\cup \{(w_{i, j, 1}, w_{i, j, 3}), (w_{i, j, 1}, w_{i, j, 4}) \mid i, j \in [1, d - 1]\} \\
&\cup \{(w_{i, j, 2}, w_{i, j, 3}), (w_{i, j, 2}, w_{i, j, 4}) \mid i, j \in [1, d - 1]\} \\
&\cup \{(w_{i, j, 4}, w_{i, j + 1, 1}) \mid i \in [1, d - 1] \land j \in [1, d - 2]\} \\
&\cup \{(w_{i, 1, 3}, w_{i + 1, 1, 1}) \mid i \in [1, d - 2]\} \\
&\cup \{(w_{i, d - 1, 4}, w_{i + 1, d - 1, 2}) \mid i \in [1, d - 2]\} \\
&\cup \{(w_{i, j, 3}, w_{i + 1, j - 1, 2}) \mid i \in [1, d - 2] \land j \in [2, d - 1]\} \\
&\cup \{(w_{d - 1, j, 3}, c_j) \mid j \in [1, d - 1]\} \\
&\cup \{(w_{d - 1, d - 1, 4}, c_d)\} \eqcom
\end{align*}
\endgroup
assuming that the children of $v$ in $N$ are $\outV_N (v) = \{c_1, c_2, \dots, c_d\}$.
Further, \emph{stretching} $N$ refers to stretching every vertex of $N$ that has an out-degree of at least~$3$,
and it produces the network $\str(N)$.
\end{definition}
\end{document}